\documentclass[11pt]{article}
\usepackage{amsmath,amssymb,amsthm}
\usepackage{graphicx}
\usepackage[margin=1in]{geometry}
\usepackage{hyperref}
\usepackage{booktabs}
\usepackage[utf8]{inputenc}
\usepackage[margin = 1in]{geometry}
\usepackage{xparse}
\usepackage{amsmath}
\usepackage{amssymb}
\usepackage{amsfonts}
\usepackage{diagbox}
\usepackage{color}
\usepackage{multicol}
\usepackage{epstopdf}
\usepackage{subcaption}
\usepackage{bm}
\usepackage{booktabs}
\newtheorem{lemma}{Lemma}

\newtheorem{proposition}{Proposition}[section]
\usepackage{amsmath}
\numberwithin{equation}{section}

\title{From Turing Patterns to Spike Self-Replication and Nucleation in a Growing-Domain Tulip Breaking Virus Model}
\author{Fanya Xia
  \thanks{Department of Mathematics, Harbin Institute of Technology, Weihai, Shandong, China. } \and Chunyi Gai \thanks{Department of Mathematics and Statistics,
University of Northern British Columbia,
Prince George, BC V2N 4Z9, Canada.
Corresponding author: chunyi.gai@unbc.ca} } \date{\today}

\begin{document}
	\maketitle
	
    \begin{abstract}
    A recently proposed reaction-diffusion model for the tulip breaking virus (TBV) has been shown to generate striped pigmentation patterns on growing tulip petals.
    In this paper, we investigate the formation of localized spike patterns in this model in the asymptotic limit of small viral diffusivity. It is known that two distinct bifurcation mechanisms can occur which generate new spikes as the domain half-length $L$ slowly increases: spike self-replication and spike nucleation. Self-replication occurs through a saddle-node bifurcation of the inner spike core problem, whereas nucleation is associated with a fold of the nonlinear outer problem. Using matched asymptotic expansions, we derive critical conditions for these mechanisms and construct phase diagrams that predict which mechanism occurs first as $L$ increases, or whether nucleation is absent. We further combine these results with a Turing instability analysis of the homogeneous steady state to obtain a phase diagram of pattern-forming mechanisms.  An interesting finding is that, for the TBV model, the Turing-unstable region lies entirely within the parameter regime where spike nucleation is possible. These results provide further insight into the mechanisms underlying the formation and evolution of pigmentation patterns in virus-infected tulips.
\end{abstract}
	
	\section{Introduction}
	
    Tulip breaking is a striking example of naturally occurring pigmentation patterning in plants \cite{valverde2012viruses}. While most tulip petals exhibit relatively uniform pigmentation, infection by tulip breaking virus (TBV) can produce characteristic striped and flame-like pigmentation patterns on the petals \cite{agoston2020genetic, lesnaw2000tulip}. These visually distinctive patterns have attracted considerable attention, but how viral activity ultimately leads to their formation is still not fully understood.

    The classical theory of diffusion-driven instability, first introduced by Turing \cite{turing}, provides a fundamental mechanism for the emergence of spatial patterns from an initially homogeneous state. Reaction-diffusion mechanisms have since been widely studied in biological pattern formation \cite{koch1994biological,kondo2021studies}. In many activator-inhibitor and activator-substrate systems, pattern formation is driven by local self-enhancement coupled with longer-range inhibition or resource depletion \cite{gierer1972theory,meinhardt2000pattern,schnakenberg1979simple}. Motivated by this mechanism, a recent reaction--diffusion TBV model was proposed to describe the formation of striped pigmentation patterns induced by tulip breaking virus \cite{wong2025tulip}. The model couples viral concentration, cellular resources required for viral replication, and anthocyanin production, with viral replication locally depleting the available substrate. These interactions generate an activator--substrate mechanism capable of producing spatially heterogeneous pigmentation patterns through a diffusion-driven instability and provide a starting point for a more detailed mathematical study of pattern formation in the TBV system.

    An additional feature of tulip development is that the petals continue to grow while the pigmentation pattern forms. It is therefore natural to consider the reaction-diffusion system on a time-dependent spatial domain rather than on a fixed interval. Pattern formation on growing domains has been studied extensively in reaction--diffusion systems \cite{crampin2002mode,madzvamuse2010stability,murray2003spatial}. In particular, Crampin et al.~\cite{crampin_1999} investigated diffusion-driven pattern formation on uniformly growing one-dimensional domains and subsequently extended this framework to spatially non-uniform growth \cite{crampin_2002}. Domain growth can alter pattern stability, induce transitions between spatial modes, and generate new pattern structures as the system evolves \cite{barrass2006mode,madzvamuse2010stability}.
   
    In this work, we are interested in the nonlinear regime where the pattern consists of well-separated localized spikes, whose existence and stability have been studied extensively in singularly perturbed reaction-diffusion systems \cite{kolokolnikov2005existence,kolokolnikov2009existence,tzou2013stability}. As the domain grows, additional spikes may emerge through two distinct mechanisms: an existing spike may split into two, referred to as \textit{spike self-replication}, or a new spike may form either at the boundary or in the region between existing peaks, referred to as \textit{spike nucleation}. These mechanisms provide a natural framework for studying how spatial patterns evolve as an infected tulip petal continues to grow. A previous study \cite{gai2025asymptotic} developed an asymptotic framework for analyzing spike self-replication and nucleation in reaction-diffusion systems on growing one-dimensional domains. Here, we investigate how these mechanisms arise in the context of the tulip breaking virus model and how they are connected to the underlying Turing pattern-forming mechanism.
    
	\subsection{The TBV Model}
    Building on the reaction-diffusion framework introduced in \cite{wong2025tulip}, we consider a modified TBV model with a constant source of viral particles. After nondimensionalization, the model takes the form
    \begin{subequations}\label{TBV_model_source}
    \begin{align}
    \frac{\partial T}{\partial t} &= \frac{\alpha}{1+V} - \delta T + \frac{\rho T^2}{k^2+T^2} \,, \quad - L < x < L \,,   
    \\
	\frac{\partial B}{\partial t} &= D_B \frac{\partial^2 B}{\partial X^2} + (1-B)B - \eta V^2 B \, , \quad  -L < X < L \, , \label{eq:B}
    \\
	\frac{\partial V}{\partial t} &= D_V \frac{\partial^2 V}{\partial X^2} + \frac{\beta}{1+T}V^2B - \mu V + a\, , \quad  -L < X < L \, ,  \label{eq:V}
    \end{align}
    \end{subequations}
    with Neumann boundary conditions $T_x(\pm L, t) = B_x(\pm L,t) =V_x(\pm L,t)=0$. Here, $T, B,$ and $V$ denote the anthocyanin concentration, the cellular resource available for viral replication, and the virus concentration, respectively. The parameters $\alpha$ and $\delta$ represent basal anthocyanin production and degradation, while $\rho$ and $k$ characterize the strength and saturation scale of its self-enhanced production. The parameter $\eta$ measures resource depletion by the virus, and $\beta$ and $\mu$ denote viral replication and decay rates. Moreover,  In this paper, we focus on the regime $D_v\ll D_B$, for which the relatively slow diffusion of the virus naturally leads to the formation of highly localized viral patterns. To capture this separation of diffusion scales, we introduce a small parameter $\varepsilon \ll 1$ and set $D_v=\varepsilon^2$, where $\varepsilon\ll1$.

    In the modified model, the constant source term $a>0$ provides a phenomenological description of continued viral input into the local tissue. Tulip breaking virus can be transmitted between plants by aphid vectors, with infection spreading during the growing season \cite{de2011non,hammond1989field}. The term $a$ therefore serves as a simple effective representation of sustained viral replenishment within the modeled region.

    The original model proposed in \cite{wong2025tulip} successfully reproduces striped pigmentation patterns and exhibits pattern splitting as the domain grows. This formulation corresponds to $a=0$. Here, we consider the extension $a>0$, which removes the virus-free equilibrium and modifies the pattern-forming behavior of the system. In particular, the source term allows localized patterns to arise through additional mechanisms beyond those present in the $a=0$ case.

    To account for the growth of the tulip petal, we prescribe the domain size using the growth law adopted in \cite{wong2025tulip},
    \begin{equation}\label{eq:Lt}
    L(t)=\frac{L_0 e^{\kappa \frac{t}{\tau}}}{1+\dfrac{1}{\xi_g}\left(e^{\kappa\frac{t}{\tau}}-1\right)}.
    \end{equation}
    Here, $L(t)$ denotes the half-length of the growing domain and $L_0=L(0)$ is the initial domain length. The parameter $\tau$ sets the characteristic time scale of petal growth, while $\kappa$ controls the growth rate. The parameter $\xi_g$ determines the overall increase in domain size, with $L(t)\to \xi_g L_0$ as $t\to\infty$.
    
    To formulate the system on a fixed domain, we follow the Lagrangian framework as in \cite{crampin_1999}. Let $X\in[-L(t),L(t)]$ denote the physical coordinate, introducing the transformation $x=\frac{X}{L(t)}$, we obtain the following rescaled system for the TBV model \eqref{TBV_model_source} 
    \begin{subequations}\label{TBV_model_Lagrangian}
    \begin{align}
    \frac{\partial T}{\partial t} &=
    \frac{\alpha}{1+V}-\delta T
    +\frac{\rho T^2}{k^2+T^2}
    -\frac{\dot L(t)}{L(t)}T \,, \quad - 1 < x < 1 \,,   \label{eq:T-grow}
    \\
	\frac{\partial B}{\partial t}&=
    D_{B_L}\frac{\partial^2 B}{\partial x^2}+(1-B)B-\eta V^2B-\frac{\dot L(t)}{L(t)}B\,, \quad - 1 < x < 1 \,, \label{eq:B-grow}
    \\
    \frac{\partial V}{\partial t}&=
    \varepsilon_L^2\frac{\partial^2 V}{\partial x^2}
    +\frac{\beta}{1+T}V^2B-\mu V+a
    -\frac{\dot L(t)}{L(t)}V, \quad - 1 < x < 1 \,, \
    \label{eq:V-grow}
    \end{align}
    \end{subequations}
    where $D_{B_L}\equiv \frac{D_B}{L^2}$, and $\varepsilon_L \equiv \frac{\varepsilon}{L}$. From \eqref{eq:Lt}, the relative growth rate is
    \begin{equation} 
    \frac{\dot L(t)}{L(t)}=\frac{\kappa}{\tau}\frac{\xi_g-1}{\xi_g+e^{\kappa\frac{t}{\tau}}-1}.
    \label{eq:Ldot-over-L}
    \end{equation}
    In the analysis below, we assume that petal growth is slow relative to the intrinsic reaction-diffusion dynamics, so that $\tau\gg 1$, and $\frac{\dot L(t)}{L(t)}\ll1$. The dilution terms are therefore higher-order on the fast pattern-forming time scale and will be neglected in the leading-order. The domain size $L(t)$, however, is retained as a slowly varying parameter, allowing the evolution of localized patterns to be studied quasi-statically as the petal grows. Therefore, we obtain
    \begin{subequations}\label{TBV_main}
    \begin{align}
    \frac{\partial T}{\partial t} &=
    \frac{\alpha}{1+V}-\delta T
    +\frac{\rho T^2}{k^2+T^2} \,, \quad - 1 < x < 1 \,;   \label{eq:T-main}
    \\
	\frac{\partial B}{\partial t}&=
    D_{B_L}\frac{\partial^2 B}{\partial x^2}+(1-B)B-\eta V^2B\,, \quad - 1 < x < 1 \,, B_x(\pm 1)=0\, ; \label{eq:B-main}
    \\
    \frac{\partial V}{\partial t}&= \varepsilon^2_{L}\frac{\partial^2 V}{\partial x^2}
    +\frac{\beta}{1+T}V^2B-\mu V+a, \quad - 1 < x < 1 \,, V_x(\pm 1)=0\, .
    \label{eq:V-main}
    \end{align}
    \end{subequations}
    
	\subsection{The main results}
	\label{sec:main-results}
    For the three-component TBV system \eqref{TBV_main}, we first analyze the onset of pattern formation before studying the nonlinear localized-pattern regime. We begin with the classical Turing instability, deriving the critical parameter values at which the homogeneous equilibrium loses stability to a stationary spatial mode and characterizing the corresponding parameter regimes for pattern formation. The three-component structure also allows, in principle, for an oscillatory diffusion-driven instability, a mechanism that does not arise in standard two-component reaction-diffusion systems. In the small-virus-diffusion regime, analytical and numerical results indicate that no oscillatory diffusion-driven instability occurs along the kinetically stable equilibrium branch considered here. Thus, the relevant diffusion-driven instability for the TBV model is the classical stationary Turing instability.

    We then focus on a localized-pattern regime and investigate how the resulting spike structures evolve as the domain slowly grows. Two distinct mechanisms for the generation of new spikes are observed. One is spike self-replication, whereby an existing spike splits into two spikes. The other is spike nucleation, whereby a new spike is generated either at the domain boundaries or from the quiescent background between existing spikes.

    Figure~\ref{fig:intro_spikes} illustrates these two mechanisms on a gradually growing domain. As $L(t)$ increases, the localized pattern evolves slowly until the domain reaches a critical length, at which the pattern undergoes a qualitative change. In the left panel, for sufficiently small source strength $a$, existing spikes undergo successive self-replication. In the right panel, for larger $a$, the existing spikes keep their shape while a new spike emerges from the quiescent background between neighboring spikes, corresponding to nucleation.

    To analyze these behaviors, we treat the domain length as a slowly varying bifurcation parameter and use matched asymptotic analysis, following \cite{gai2025asymptotic}. The localized-pattern problem is reduced to an inner core problem coupled to an outer algebraic system, from which we derive the thresholds for spike self-replication and nucleation.  Comparing these thresholds shows how the source strength $a$ plays a key role in selecting between the two mechanisms as the domain grows.

	\begin{figure}[h]
		\centering		\includegraphics[width=\textwidth]{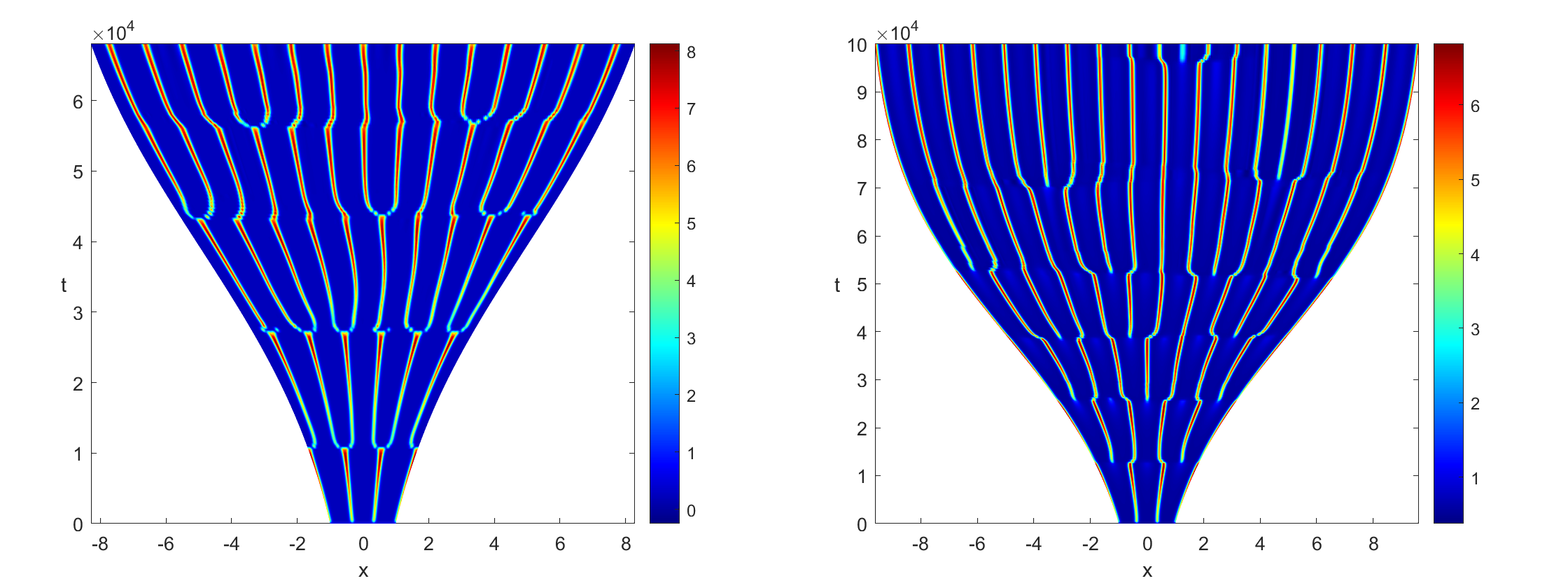}
		\caption{Full numerical simulations of the growing-domain TBV model \eqref{eq:T-grow}--\eqref{eq:V-grow}, computed using FlexPDE \cite{flexpde2015}, showing the space-time evolution of $V(x,t)$ on $x\in[-L(t),L(t)]$. Here, $L(t)$ is given by \eqref{eq:Lt} with $L_0=1$, $\omega=0.73$, $\xi_g=9.94$, and $\tau=1.3\times10^{4}$. Left: spike self-replication for $a=0.04$. Right: spike nucleation for $a=0.09$. The parameters are $\beta=1.5$, $\alpha=0.04$, $\delta=0.03$, $\rho=0.1$, $k=16$, $\eta=1$, $\mu=0.2$, $D_B=0.1$, and $D_V=1.3\times10^{-4}$.} 
		\label{fig:intro_spikes}
	\end{figure}

    The remainder of this paper is organized as follows. In Section \ref{Sec2:Turing}, we analyze diffusion-driven pattern formation in the three-component TBV system, including both classical and oscillatory Turing instabilities. In Section 3, we derive the inner core problem and determine the onset of spike self-replication. In Section 4, we construct the outer problem and determine the onset of spike nucleation. In Section 5, we compare the two mechanisms and characterize the parameter regimes in which spike self-replication or nucleation occurs as the domain grows. Finally, Section 6 summarizes the main findings and discusses possible extensions.
	

	\section{Turing pattern formation} 
	\label{Sec2:Turing}
    
    In this section, we analyze diffusion-driven instabilities of the full three-component TBV model \eqref{TBV_main}. We first compute the spatially homogeneous equilibria and their stability under the reaction kinetics, and then explore how diffusion modifies the spectrum of the linearized system. In particular, we consider two possible finite-wavenumber instabilities: a stationary Turing instability, associated with the loss of stability through a real eigenvalue, and an oscillatory Turing instability, associated with the crossing of a complex-conjugate pair. The analysis below determines the conditions under which each mechanism can occur in the TBV system.
    
    \subsection{Homogeneous equilibria and kinetic stability}
    \label{subsec:homo_eq}
    
    We first consider the spatially homogeneous dynamics associated with the TBV model \eqref{TBV_main}. Setting the diffusion terms to zero yields the spatially homogeneous kinetic system:   \begin{subequations}\label{TBV_ODE}
    \begin{align}
    \frac{dT}{dt}
    &= \frac{\alpha}{1+V}-\delta T+\frac{\rho T^2}{k^2+T^2},
    \label{ode_T}\\
    \frac{dB}{dt} 
    &=
    (1-B)B-\eta V^2B, \label{ode_B}\\
    \frac{dV}{dt}
    &=
    \frac{\beta}{1+T}V^2B-\mu V+a.
    \label{ode_V}
    \end{align}
    \end{subequations}
    
    Let $(\bar T,\bar B,\bar V)$ denote a homogeneous equilibrium of \eqref{TBV_ODE}. With the source term $a>0$, we consider steady states with $\bar V>0$. The equilibrium satisfies
    \begin{equation}
    \label{eq_eqn}
    \frac{\alpha}{1+\bar V}-\delta\bar T +\frac{\rho\bar T^2}{k^2+\bar T^2}=0, \qquad
    \bar B(1-\bar B-\eta\bar V^2)=0, \qquad
    \frac{\beta}{1+\bar T}\bar V^2\bar B-\mu\bar V+a=0.
    \end{equation}
    One equilibrium branch is obtained by setting $\bar B=0$.  In this case, the virus equation gives $\bar V=\frac{a}{\mu},$ while $\bar T$ satisfies
    \begin{equation}
    \bar T^3-b_1\bar T^2+b_2\bar T-b_3=0,
    \label{eq:Bzero_cubic}
    \end{equation}
    where $b_1= \frac{\rho}{\delta} +\frac{\alpha}{\delta(1+\bar V)}, b_2=k^2, b_3= \frac{\alpha k^2}{\delta(1+\bar V)}.$ Moreover, we conclude the following properties of the root
    \begin{proposition}
        Let  $\mathcal{P}(T)=T^3-b_1T^2+b_2T-b_3,$ with $b_1,b_2,b_3>0.$ Then $\mathcal{P}$ has no negative real roots and at least one positive real root. Moreover, if $b_1^2 \le 3b_2,$ then $\mathcal{P}$ has a unique positive real root. Otherwise, the number of real roots is determined by the discriminant
        \[\Delta =b_1^2b_2^2-4b_2^3-4b_1^3b_3-27b_3^2+18b_1b_2b_3.\]
        In particular, $\Delta>0$ gives three distinct positive real roots, whereas $\Delta<0$ gives one positive real root and a complex-conjugate pair. The case $\Delta=0$ corresponds to repeated positive real roots.
    \end{proposition}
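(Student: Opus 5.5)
The plan is to treat the three assertions in turn, using only sign considerations, the shape of $\mathcal{P}$ via its derivative, and the classical cubic discriminant.

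\emph{No negative roots, at least one positive root.} For $T\le 0$, each of $T^3$, $-b_1T^2$, $b_2T$, $-b_3$ is nonpositive and $-b_3<0$. Hence $\mathcal{P}(T)<0$ for all $T\le 0$, so there is no nonpositive root. In particular $\mathcal{P}(0)=-b_3<0$. Since $\mathcal{P}(T)\to+\infty$ as $T\to+\infty$, the intermediate value theorem gives a positive root. One could also note that the coefficient signs $+,-,+,-$ make Descartes' rule give $3$ or $1$ positive roots and $0$ negative roots.

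\emph{Uniqueness when $b_1^2\le 3b_2$.} Differentiate to get $\mathcal{P}'(T)=3T^2-2b_1T+b_2$, whose discriminant is $4(b_1^2-3b_2)\le 0$. Thus $\mathcal{P}'\ge 0$ on $\mathbb{R}$, vanishing at most at the single point $T=b_1/3$. So $\mathcal{P}$ is strictly increasing and has exactly one real root, which is positive by the first step. If $b_1^2=3b_2$ and this root happens to sit at $b_1/3$, it is a triple root, $\mathcal{P}(T)=(T-b_1/3)^3$. It is still the unique real root, so the claim "unique positive real root" holds, counted without multiplicity.

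\emph{The case $b_1^2>3b_2$.} Invoke the standard discriminant of a monic cubic $T^3+pT^2+qT+r$, namely $p^2q^2-4q^3-4p^3r-27r^2+18pqr$. Substituting $p=-b_1$, $q=b_2$, $r=-b_3$ gives exactly the stated $\Delta$; the sign bookkeeping here is the main place to be careful. Writing $\Delta=\prod_{i<j}(T_i-T_j)^2$ over the roots yields the trichotomy:
\begin{itemize}
\item $\Delta>0$ gives three distinct real roots;
\item $\Delta<0$ gives one real root and a complex-conjugate pair;
\item $\Delta=0$ gives a repeated root, which must be real since nonreal roots come in conjugate pairs.
\end{itemize}
Combining this with the first step, every real root is positive, and the stated classification follows.

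The expected obstacle is only the algebraic verification of the discriminant substitution. Everything else follows from signs and monotonicity. As a remark, $\Delta\le0$ is automatic when $b_1^2\le 3b_2$, so the case split is consistent.
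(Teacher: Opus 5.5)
Your proof is correct and complete. The sign argument on $T\le 0$ together with the intermediate value theorem gives the first claim. The nonpositive discriminant $4(b_1^2-3b_2)$ of $\mathcal{P}'$ gives strict monotonicity, and you handle the triple-root caveat at $b_1^2=3b_2$ properly. The trichotomy for $\Delta=\prod_{i<j}(T_i-T_j)^2$, combined with positivity of all real roots, gives the rest. Your substitution $p=-b_1$, $q=b_2$, $r=-b_3$ reproduces the stated $\Delta$ term by term. Two optional refinements: $\Delta<0$ holds strictly when $b_1^2<3b_2$. When $b_1^2>3b_2$, $\Delta=0$ means exactly one double and one simple root, since a triple root forces $b_1^2=3b_2$.

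The paper states this proposition without any proof, so there is no alternative route to compare against. Your argument is the standard justification the paper leaves implicit. The paper uses Descartes' rule of signs elsewhere (in the lemmas on $g$ and $f$), and as you note it also gives $0$ negative and $1$ or $3$ positive roots here.
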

    Since the subsequent pattern-forming analysis focuses on equilibria with a positive cellular resource, we now consider the positive-resource branch $\bar B>0$. From the second and third equation in \eqref{eq_eqn}, $\bar B=1-\eta\bar V^2,$ and $\mu \bar{V} -a>0$ hence a biologically admissible equilibrium requires
    \begin{equation}
    \frac{a}{\mu}<\bar V<\frac{1}{\sqrt{\eta}},
    \qquad 0<\bar B<1.
    \label{eq:V_admissible}
    \end{equation} 
    For a given $\bar V$, the first equation in \eqref{eq_eqn} determines $\bar T$ implicitly. Define
    \begin{equation}\label{fun_f}
        f(T,V):= \frac{\alpha}{1+V} -\delta T +\frac{\rho T^2}{k^2+T^2}.
    \end{equation}
    Along the positive equilibrium branch, we write
    \begin{equation}
    \bar T=\Theta(\bar V),
    \label{eqn:theta(v)}
    \end{equation}
    where $f(\Theta(\bar V),\bar V)=0.$ Provided $f_T\neq0$, the implicit function theorem gives
    \begin{equation}
    \Theta'(\bar V) = -\frac{f_V}{f_T} = \frac{\alpha}{(1+\bar V)^2f_T},
    \label{eq:Theta_prime}
    \end{equation}
    where
    \begin{equation}
    f_T=-\delta+\frac{2\rho k^2\bar T} {(k^2+\bar T^2)^2}.
    \label{eq:fT}
    \end{equation}
    The term $\frac{2\rho k^2T}{(k^2+T^2)^2}$ attains its maximum at $T=k/\sqrt{3}$. We therefore assume throughout that 
    \begin{equation}
    \delta>\frac{9\rho}{8\sqrt{3}\,k},
    \label{eq:fT_condition}
    \end{equation}
    which ensures that $f_T<0$ for all $T\geq0$. Since $f(0,\bar V)>0$ and $f(T,\bar V)\to-\infty$ as $T\to\infty$, it follows that, for each $\bar V$ in $(\frac{a}{\mu},\frac{1}{\sqrt{\eta}})$, there is a unique positive solution $\bar T=\Theta(\bar V)$. Moreover, $\Theta'(\bar V)<0$, so the equilibrium anthocyanin concentration decreases as the viral concentration increases.
    
    Now substituting $\bar T=\Theta(\bar V), \bar B=1-\eta\bar V^2$ into the virus equilibrium equation yields
    \begin{equation}
    \mathcal{Q}(\bar V;\mu,a) :=\frac{\beta}{1+\Theta(\bar V)} \bar V^2(1-\eta\bar V^2) -\mu\bar V+a=0. \label{eq:Q_equilibrium}
    \end{equation}
    Equivalently, solving \eqref{eq:Q_equilibrium} for the viral decay rate $\mu$ gives 
    \begin{equation}
    \mu=\mathcal M(\bar V;a):=
    \frac{\beta\bar V(1-\eta\bar V^2)}{1+\Theta(\bar V)} +\frac{a}{\bar V}, \qquad 0<\bar V<\frac{1}{\sqrt{\eta}}.
    \label{eq:mu_branch}
    \end{equation}
    Thus, the positive homogeneous equilibrium branch can be parametrized by $\bar V$ through
    \begin{equation}
    \bar T=\Theta(\bar V),
    \qquad
    \bar B=1-\eta\bar V^2, \qquad \mu=\mathcal M(\bar V;a).
    \label{eq:eq_param}
    \end{equation}
    At the endpoints of the admissible interval,
    \begin{equation}
    \mathcal M(\bar V;a)\to\infty \quad\text{as}\quad
    \bar V\to0^+,
    \qquad
    \mathcal M(\bar V;a)\to a\sqrt{\eta}
    \quad\text{as}\quad
    \bar V\to\frac{1}{\sqrt{\eta}}.
    \label{eq:M_limits}
    \end{equation}
    
    The representation \eqref{eq:mu_branch} also provides a convenient description of the bifurcation structure of the homogeneous equilibria. A fold point satisfies
    \begin{equation}
    \mathcal M_{\bar V}(\bar V;a)=0,
    \label{eq:fold_condition}
    \end{equation}
    and corresponds to a saddle-node bifurcation, provided $\mathcal M_{\bar V\bar V}(\bar V;a)\neq0,$. As the source strength $a$ is varied, two saddle-node points may coalesce at a cusp. This occurs when    
    \begin{equation}
    \mathcal M_{\bar V}=0,
    \qquad 
    \mathcal M_{\bar V\bar V}=0,
    \label{eq:cusp_condition}
    \end{equation}
    subject to the usual nondegeneracy and transversality conditions.
    
    We now illustrate this bifurcation structure using the representative parameter values listed in Table~\ref{tab:params}. Unless otherwise specified, the parameters other than $\mu$ and $a$ are fixed at these values throughout this section, while $\mu$ is used as the primary bifurcation parameter.
 	
	\begin{table}[h]
	\centering
	\begin{tabular}{ccccccccc}
		\toprule
		$\alpha$ & $\delta$ & $\rho$ & $k$ & $\eta$ & $\beta$ & $\mu$ & $D_B$ & $D_V$ \\
		\midrule
		0.04 & 0.03 & 0.1 & 16 & 2 & 1.5 & 0.2 & 0.013 & 0.00013 \\
		\bottomrule
	\end{tabular}
	\caption{Representative parameter values used in the bifurcation and stability analysis.}
	\label{tab:params}
    \end{table}

    For $a=0.02$, the equilibrium branch shown in Figure~\ref{fig:bifurcation} has an S-shaped structure with two saddle-node bifurcations located at $(\mu_1,\bar V_1)\approx(0.23,0.18),  (\mu_2,\bar V_2)\approx(0.26,0.39).$ Hence, for $\mu_1<\mu<\mu_2$, three positive homogeneous equilibria coexist in the ODE dynamics. The effect of varying the source strength $a$ is illustrated in Figure~\ref{fig:cusp_comparison}. As $a$ increases, the two saddle-node points move toward one another and eventually merge at a cusp. Solving the cusp conditions \eqref{eq:cusp_condition} numerically gives the critical value $a^*\approx0.037029.$  Thus, for $a<a^*$, the equilibrium branch contains two folds; at $a=a^*$, the two folds merge at a cusp point; and for $a>a^*$, the fold points disappear and the equilibrium branch becomes single-valued in $\mu$.
  \begin{figure}[h]
    \centering
    \begin{subfigure}{0.45\textwidth}
        \centering
        \includegraphics[width=\textwidth]{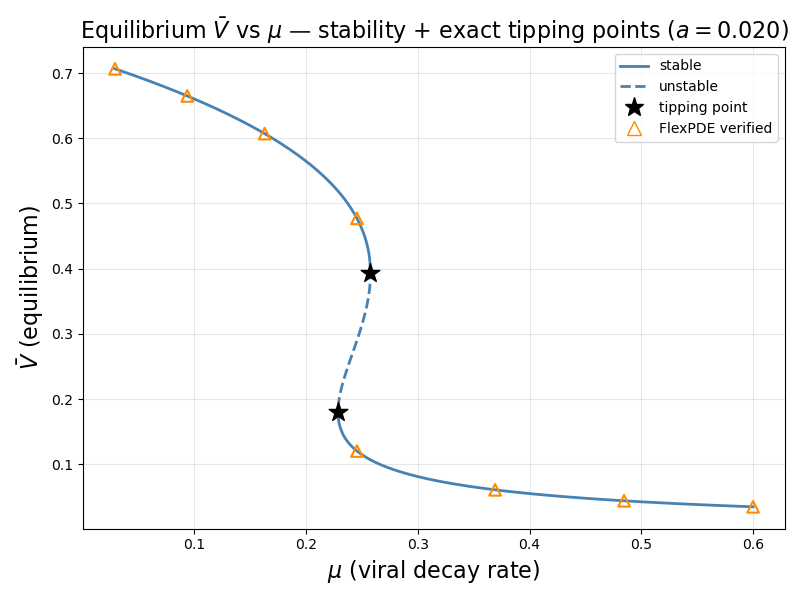}
        \caption{}
        \label{fig:bifurcation_a}
    \end{subfigure}
    \hfill
    \begin{subfigure}{0.45\textwidth}
        \centering
        \includegraphics[width=\textwidth]{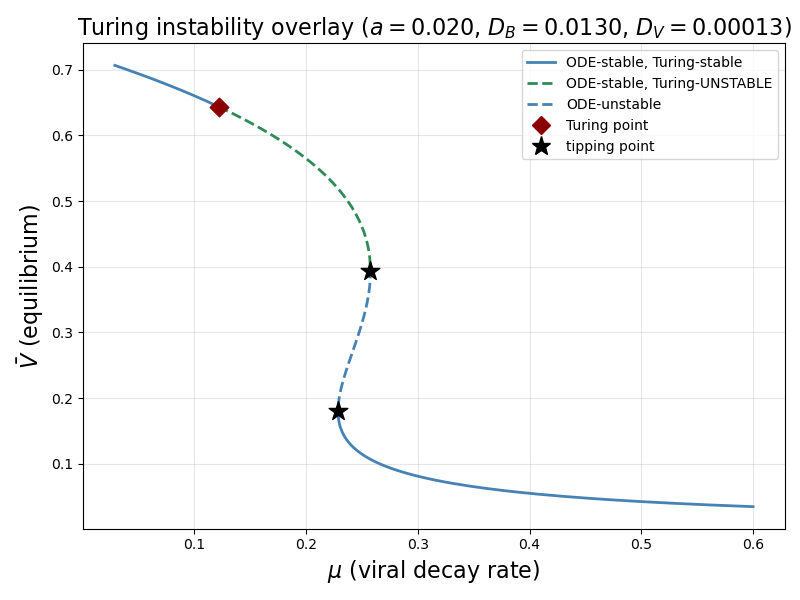}
        \caption{}
        \label{fig:bifurcation_b}
    \end{subfigure}
    \caption{(a) Bifurcation diagram of the homogeneous equilibria with respect to $\mu$ for $a=0.02$. Solid and dashed curves denote stable and unstable equilibria, respectively. The stars mark the two saddle-node bifurcation points, while the triangles indicate representative equilibria verified numerically using \textit{FlexPDE}. (b) Turing instability along the upper stable branch. The remaining parameters are chosen as listed in Table~\ref{tab:params}.}
    \label{fig:bifurcation}
    \end{figure}

    \begin{figure}[h]
    \centering
    \begin{subfigure}{0.325\textwidth}
        \centering
        \includegraphics[width=\textwidth]{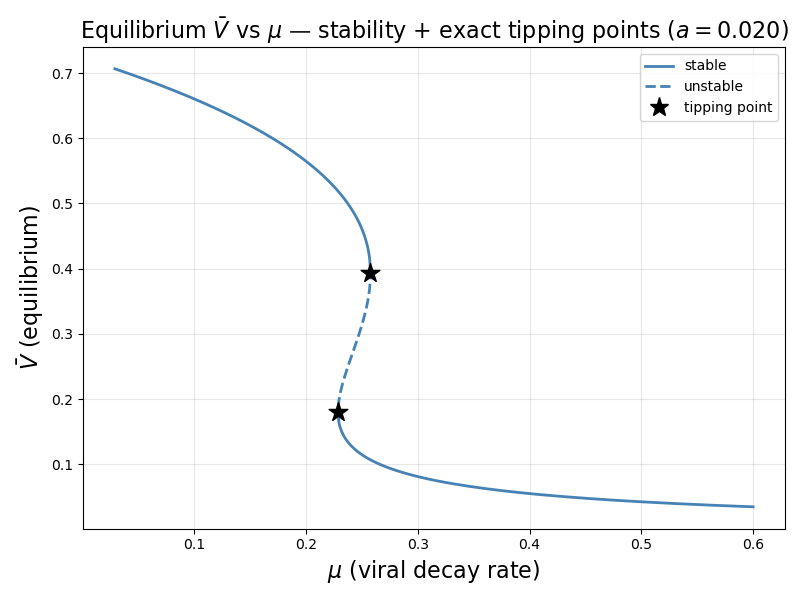}
        \caption{}
        \label{fig:cusp_a}
    \end{subfigure}
    \begin{subfigure}{0.325\textwidth}
        \centering
        \includegraphics[width=\textwidth]{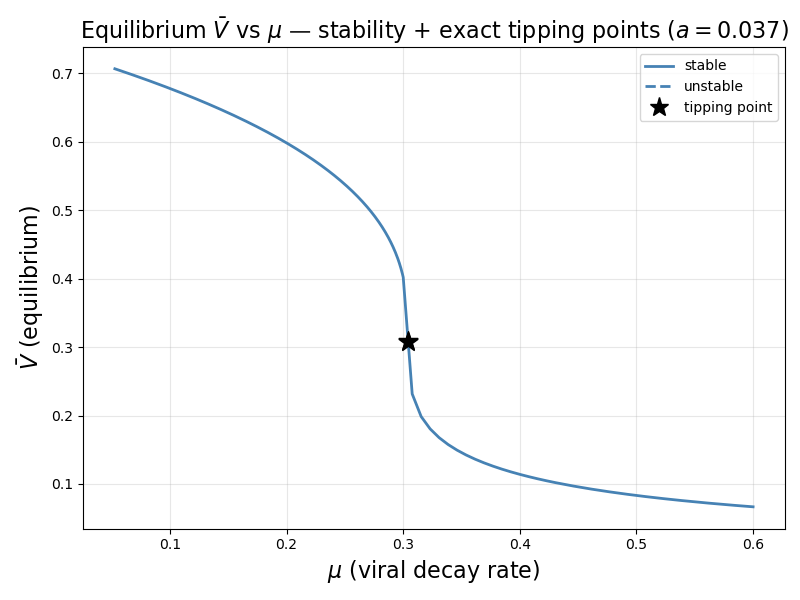}
        \caption{}
        \label{fig:cusp_b}
    \end{subfigure}
    \begin{subfigure}{0.325\textwidth}
        \centering
        \includegraphics[width=\textwidth]{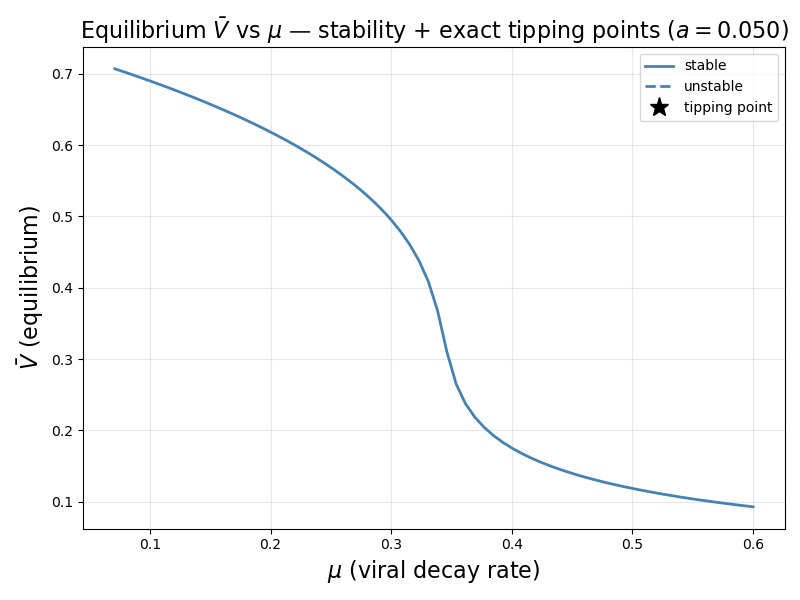}
        \caption{}
        \label{fig:cusp_c}
    \end{subfigure}
    \caption{Equilibrium branches $\bar V$ versus $\mu$ for representative source strengths below, near, and above the cusp value $a^*$. 
    (a) $a<a^*$, where two saddle-node bifurcations are present.
    (b) $a\approx a^*$, where the two saddle-node bifurcations merge at a cusp point.
    (c) $a>a^*$, where the branch has no turning points.}
    \label{fig:cusp_comparison}
    \end{figure}

    To determine the kinetic stability of the equilibrium branches, we linearize \eqref{TBV_ODE} around
    $(\bar T,\bar B,\bar V)$, which yields the following Jacobian matrix
    \begin{equation}
    J\big|_{\mathrm{eq}}=
    \begin{pmatrix} 
    f_T & 0 & -\dfrac{\Phi}{1+\bar V}\\[2mm]
    0 & -\bar B & -2\eta\bar V\bar B\\[2mm]
    -\dfrac{\Lambda}{1+\bar T} & \dfrac{\Lambda}{\bar B} & \dfrac{\Lambda-a}{\bar V}
    \end{pmatrix}, 
    \label{eq:Jacobian}
    \end{equation}
    where
    \begin{equation}
    \Phi:= \frac{\alpha}{1+\bar V}, \qquad \Lambda:= \frac{\beta\bar V^2\bar B}{1+\bar T} = \mu\bar V-a.
    \label{eq:PhiLambda}
    \end{equation}  
    The characteristic polynomial of \eqref{eq:Jacobian} is
    \begin{equation} \lambda^3+c_1\lambda^2+c_2\lambda+c_3=0, \label{eq:charpoly_ode} 
    \end{equation}
    where 
    \begin{subequations}
    \begin{align}
    c_1 &= -f_T+\bar B-\mu+\frac{2a}{\bar V}, \label{eq:c1}\\
    c_2 &= \left[2\eta\bar V\Lambda-\frac{\bar B(\Lambda-a)}{\bar V}\right]+\left[\frac{f_T(\Lambda-a)}{\bar V}-\frac{\Phi\Lambda}{(1+\bar T)(1+\bar V)}\right]-f_T\bar B, \label{eq:c2}\\
    c_3 &= -f_T\left[2\eta\bar V\Lambda-\frac{\bar B(\Lambda-a)}{\bar V}\right]-\frac{\Phi\bar B\Lambda}{(1+\bar T)(1+\bar V)}. \label{eq:c3}
    \end{align}
    \end{subequations}

    By the Routh--Hurwitz criterion, the homogeneous equilibrium is linearly stable under the reaction kinetics if
    \begin{equation}
    c_1>0, \qquad c_2>0, \qquad c_3>0, \qquad c_1c_2>c_3.
    \label{eq:routhhurwitz}
    \end{equation}
    
    Applying the Routh--Hurwitz conditions \eqref{eq:routhhurwitz} along the equilibrium branch shows that, for $a=0.02$, the solid branches in Figure~\ref{fig:bifurcation_a} are kinetically stable, while the middle branch is unstable. Consequently, for $\mu_1<\mu<\mu_2,$ the system exhibits bistability between two positive homogeneous equilibria.
    
    To verify these stability results numerically, we perturbed a few equilibria along the bifurcation branch using \textit{FlexPDE}. Perturbations around stable equilibria decayed back to the steady state, whereas those around unstable equilibria moved away and were attracted to the nearby stable state. The triangular markers in the left panel of Figure~ \ref{fig:bifurcation_a} indicate the equilibria used in these tests, which agree with stability predictions given in \eqref{eq:routhhurwitz}.

    
    \subsection{Diffusion-driven instability}
    \label{subsec:turing}
    We now examine the effect of diffusion on the stability of the homogeneous equilibria identified above. The analysis is carried out on the fixed domain $x\in[-1,1]$ subject to Neumann boundary conditions, $B_x(\pm1,t)=V_x(\pm1,t)=0.$

    Let $\bar{\mathbf u}=(\bar T,\bar B,\bar V)^{\mathrm T}$ be a kinetically stable homogeneous equilibrium. We introduce the perturbations
    \begin{equation}
    \mathbf u(x,t)=\bar{\mathbf u}+ e^{\lambda_m t}\cos(\nu x),\qquad \nu=(m\pi)^2, \qquad m=0,1,2,\ldots.
    \label{eq:perturbation}
    \end{equation}
    For the $m$-th spatial mode, the linearized Jacobian is
    \begin{equation}
    J_m=J-\nu D,\qquad D=\operatorname{diag}(0,D_{B_L},D_{V_L}), \qquad D_{V_L}:=\varepsilon_L^2
    \label{eq:Jm}
    \end{equation}
    where $J$ is the kinetic Jacobian in \eqref{eq:Jacobian}. The corresponding $\lambda_m$ satisfies
    \begin{equation}   \lambda_m^3+C_1(\nu)\lambda_m^2+C_2(\nu)\lambda_m+C_3(\nu)=0,
    \label{eq:charpoly_pde}
    \end{equation}
    where
    \begin{align}
    C_1(\nu)&=c_1+(D_{B_L}+D_{V_L})\nu,\label{eq:C1m}\\
    C_2(\nu)&=c_2+\left[D_{V_L}(\bar B-f_T)-D_{B_L}(f_T+q_0)\right]\nu+D_{B_L}D_{V_L}\nu^2,\label{eq:C2m}\\
    C_3(\nu)&=c_3+\left\{D_{B_L}\left[f_T\left(\mu-\frac{2a}{\bar V}\right)-\frac{\Phi\Lambda}{(1+\bar T)(1+\bar V)}\right]-D_{B_L}\bar Bf_T\right\}\nu-f_TD_{B_L}D_{V_L}\nu^2,\label{eq:C3m}
    \end{align}
    and $C_j(0)=c_j, j=1,2,3$, as given in \eqref{eq:c1}-\eqref{eq:c3}. For each spatial mode, the Routh--Hurwitz conditions are
    \begin{equation}
    C_1(\nu)>0,\qquad C_2(\nu)>0,\qquad C_3(\nu)>0, \qquad
    H(\nu):=C_1(\nu)C_2(\nu)-C_3(\nu)>0.\label{eq:RH_pde}
    \end{equation}
    Since the underlying homogeneous equilibrium is kinetically stable, these inequalities hold at $m=0$. Moreover, $C_1(\nu)=c_1+(D_{B_L}+D_{V_L})\nu>0.$  Hence diffusion-driven instability can arise either through the vanishing of \(C_3\), corresponding to a stationary Turing instability, or through the vanishing of \(H\) while \(C_2\) and \(C_3\) remain positive, corresponding to an oscillatory Turing instability.
    
    \subsubsection{Stationary Turing instability.}
    \label{sec: stationary}
    A stationary Turing instability is characterized by a real eigenvalue crossing through zero and hence by $C_3(\nu)=0.$ Writing
    \begin{equation} \label{eq:C3_quadratic}
    C_3(\nu)=c_3+p\nu+q\nu^2,
    \end{equation}
    where
    \begin{equation}\label{eq:pq}
    p=D_B\left[f_T\left(\mu-\frac{2a}{\bar V}\right)-\frac{\Phi\Lambda}{(1+\bar T)(1+\bar V)}\right]-D_V\bar Bf_T,\qquad
    q=-f_TD_BD_V,
    \end{equation}
    where $f_T, \Phi, \Lambda$ are given in \eqref{eq:fT} and \eqref{eq:PhiLambda}. With the parameter regime considered in \eqref{eq:fT_condition} we have $f_T<0$, and $q>0$. Since kinetic stability also gives $c_3>0$,  in the continuous dispersion relation, the onset of a stationary Turing instability occurs when the minimum of $C_3(\nu)$ first decrease to zero, which implies that $C_3(\nu_c)=0, C_3'(\nu_c)=0,$ or equivalently,
    \begin{equation}
    p^2-4qc_3=0,\qquad \nu_c=-\frac{p}{2q}>0.
    \label{eq:stationary_threshold}
    \end{equation}
    Beyond this threshold, an instability band with \(C_3(\nu)<0\) exists with
    \begin{equation}
    \nu_\pm=\frac{-p\pm\sqrt{p^2-4qc_3}}{2q}. \label{eq:nu_pm}
    \end{equation}
    Hence, a stationary spatial instability occurs whenever an admissible Neumann eigenvalue lies in $(\nu_-,\nu_+)$.

    Applying the stationary instability criterion \eqref{eq:stationary_threshold} along the kinetically stable equilibrium branch, in Fig. \ref{fig:bifurcation_b} we find a stationary Turing bifurcation on the upper stable branch at $\mu^*\approx0.12, \bar V^*\approx 0.64$. To verify this threshold, we also perform direct numerical simulations of the full system \eqref{TBV_main} using \textit{FlexPDE} for values of $\mu$ on either side of $\mu^*$. As shown in Figure~\ref{fig:turing_sim1}, the homogeneous state is stable for $\mu<\mu^*$, whereas a stationary spatial pattern develops for $\mu>\mu^*$, which is consistent with the predicted Turing threshold.
    
	\begin{figure}[h]
		\centering
		\includegraphics[width=\textwidth]{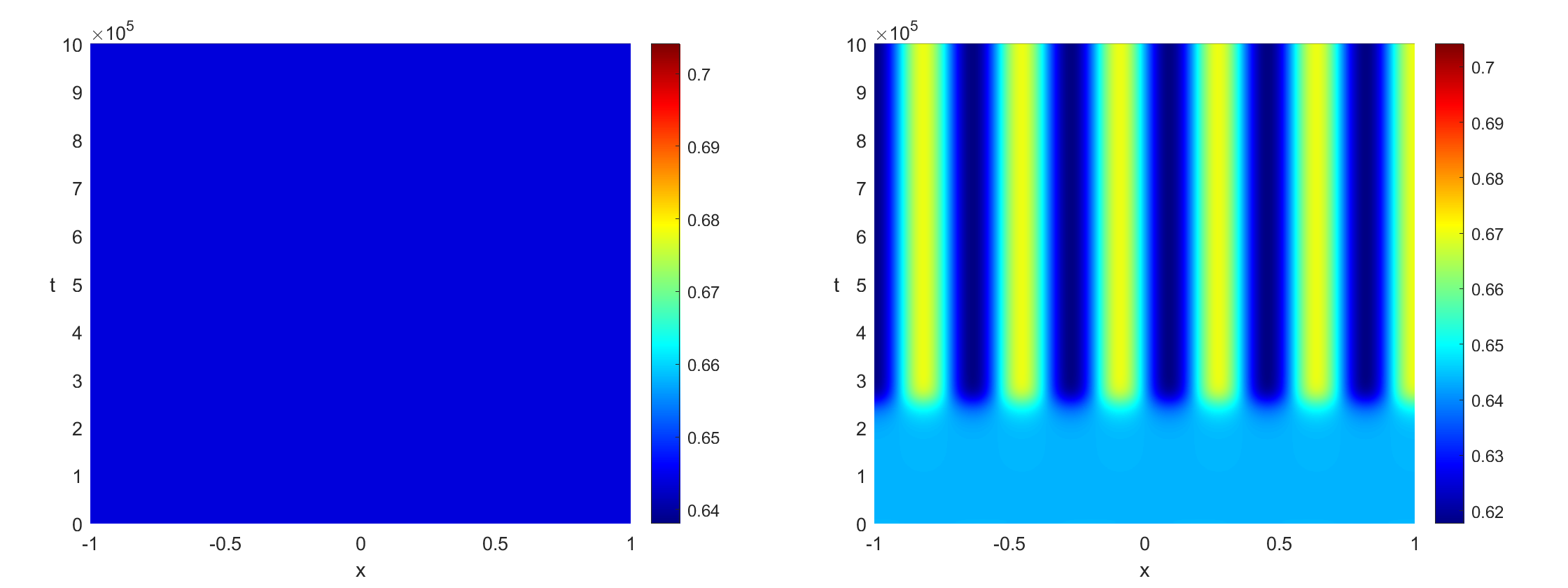}
        \caption{Space-time evolution of the viral concentration $V(x,t)$ on the fixed domain $x\in[-1,1]$ for values of $\mu$ on either side of the stationary Turing bifurcation point $\mu^*\approx0.1218$. Left: $\mu=0.121$, for which the perturbation decays and the solution returns to the homogeneous equilibrium. Right: $\mu=0.122$, for which a stationary spatial pattern develops. The remaining parameters are given in Table~\ref{tab:params}. The initial condition is $T=0.80$, $B=0.037$, and $V=0.70+0.01\cos(10x)$.}
        \label{fig:turing_sim1}
    \end{figure}
    Alternatively, fixing $\mu$ and using the diffusion ratio $r:=\frac{D_{B_L}}{D_{V_L}}$, the stationary instability condition \eqref{eq:stationary_threshold} can be expressed as a quadratic inequality in $r$, with instability occurs when
    \begin{equation}
    X^2r^2+2f_T\left(X\bar B+2c_3\right)r+(f_T\bar B)^2>0,\qquad X:=\frac{\Phi\Lambda}{(1+\bar T)(1+\bar V)}-f_T\left(\mu-\frac{2a}{\bar V}\right).
    \label{eq:DB_quadratic}
    \end{equation}
    
    This condition shows that a stationary Turing instability can arise once the diffusion ratio $r$ is sufficiently large. Since $\frac{D_{B_L}}{D_{V_L}}=\frac{D_B}{D_V},$ the corresponding threshold in $r$ is independent of the domain size $L$. Thus, domain growth does not shift the critical diffusion ratio itself, but instead affects the available spatial modes. As $L$ increases, these modes become more closely spaced, so that a mode may enter the unstable band even though the critical diffusion ratio itself remains fixed.
    
    For the parameter values in Table~\ref{tab:params}, with $a=0.02$, the relevant threshold in the regime is $D_B/D_V=r>r^*\approx29.31.$ Thus, for fixed $D_{V_L}$, the onset of the stationary Turing instability is equivalently described by the critical value $D_{B}^*=r^*D_{V}.$ This transition is illustrated in Figure~\ref{fig:DB_compare}, where $D_{B}$ is chosen slightly below and above $D_{B}^*\approx0.003811$. For $D_{B}<D_{B}^*$, a small perturbation of the homogeneous steady state decays and the solution returns to the spatially uniform state. In contrast, for $D_{B}>D_{B}^*$, the homogeneous state becomes unstable and a stationary Turing pattern develops. The numerical simulations are consistent with the threshold predicted by the linear stability analysis.
    
        \begin{figure}[h]
		\centering
		\includegraphics[width=\textwidth]{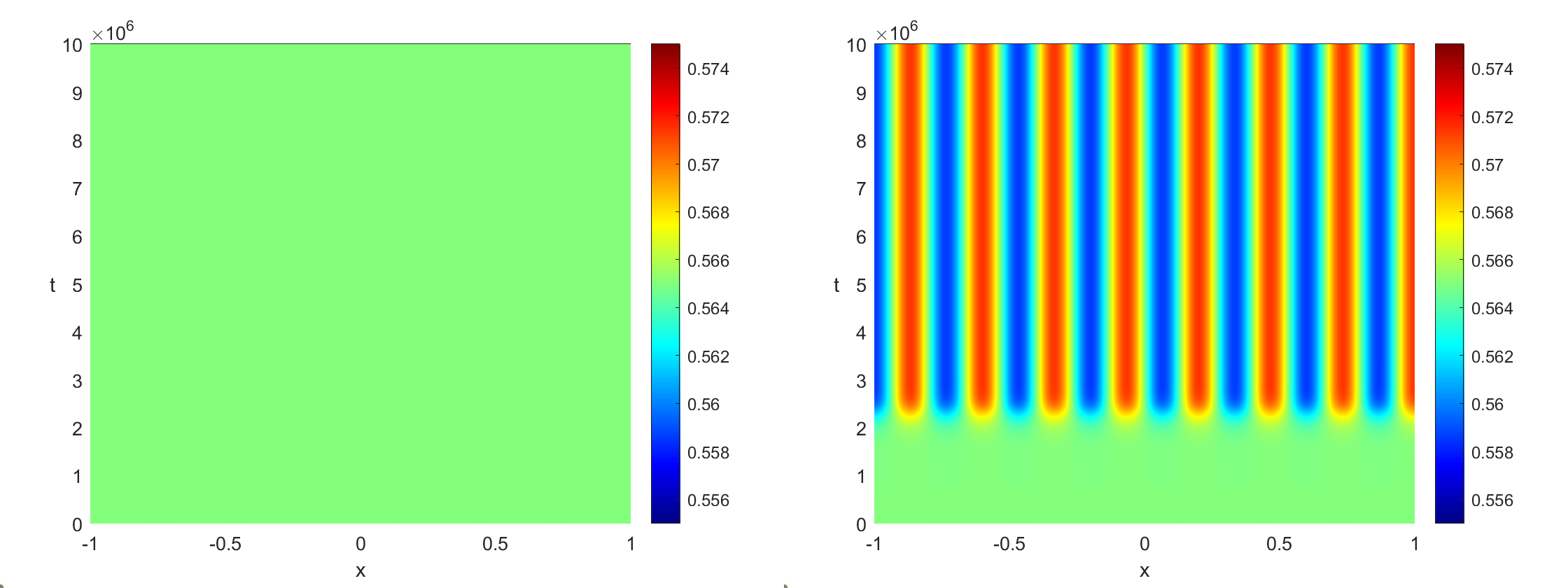}
		\caption{Space--time plots of $V(x,t)$ on $x\in[-1,1]$ for $D_B$ below and above the critical value $D_B^*\approx0.003811$. Left: $D_B=0.0038105<D_B^*$, no pattern forms. Right: $D_B=0.0038111>D_B^*$, a stationary stripe pattern emerges. The parameters are given in table \ref{tab:params}, with the initial condition $T=0.86$, $B=0.36$, $V=0.57 + 0.01\cos(10x)$.}
		\label{fig:DB_compare}
	\end{figure}
    \subsubsection{Oscillatory Turing instability.}\label{sec:oscillatory}
    We next consider whether diffusion can destabilize the homogeneous equilibrium through a complex conjugate pair of eigenvalues. For the cubic characteristic polynomial \eqref{eq:charpoly_pde}, an oscillatory crossing is characterized by
    \begin{equation}
    H(\nu):=C_1(\nu)C_2(\nu)-C_3(\nu)=0,
    \label{eq:H_condition}
    \end{equation}
    together with
    \begin{equation}
    C_2(\nu)>0,\qquad C_3(\nu)>0.
    \label{eq:osc_positive}
    \end{equation}
    At such a point, the critical eigenvalues are \(\lambda=\pm i\omega\), with $\omega^2=C_2(\nu)>0.$
    
    From \eqref{eq:H_condition}, \(H\) is a cubic polynomial in \(\nu\), which can be expressed as
    \begin{equation}
    H(\nu)=A_3\nu^3+A_2\nu^2+A_1\nu+A_0.
    \label{eq:Hpoly}
    \end{equation}
    Writing $C_2(\nu)=c_2+b\nu+d\nu^2,  C_3(\nu)=c_3+p\nu+q\nu^2,$     with $b=D_{V_L}(\bar B-f_T)-D_{B_L}\left(f_T+\mu-\frac{2a}{\bar V}\right),\, d=D_{B_L}D_{V_L},$    the coefficients are
    \begin{subequations}\label{eq:Hcoeff}
    \begin{align}
    A_0&=c_1c_2-c_3,\\
    A_1&=c_1b+(D_{B_L}+D_{V_L})c_2-p,\\
    A_2&=c_1d+(D_{B_L}+D_{V_L})b-q,\\
    A_3&=(D_{B_L}+D_{V_L})d.
    \end{align}
    \end{subequations}
    For every kinetically stable equilibrium, $A_0>0, A_3>0,$ and hence
    \[H(0)>0,\qquad H(\nu)\to+\infty \quad\text{as}\quad \nu\to\infty.\]
    Therefore, an oscillatory instability can occur only if \(H\) develops a local minimum at some \(\nu>0\) that reaches zero while \(C_2(\nu)\) and \(C_3(\nu)\) remain positive. 
    Under the small-virus-diffusion regime $D_{V_L}=\mathcal{O}(\varepsilon^2), 0<\varepsilon\ll1.$ Define \begin{equation}
    s:=f_T+\mu-\frac{2a}{\bar V},
    \label{eq:s_def}
    \end{equation}
    from \eqref{eq:c1}, $c_1=\bar B-s$, then expansion of \eqref{eq:Hcoeff} as $D_{V_L}\to0$ gives
    \begin{align}
    A_1&=D_{B_L}\left[s^2-2\bar Bs+2\eta\bar V\Lambda\right] +\mathcal O(D_{V_L}),\label{eq:A1_smallDV}\\
    A_2&=-D_{B_L}^2s+\mathcal O(D_{V_L}).
    \label{eq:A2_smallDV}
    \end{align}
    If \(s<0\), then $s^2-2\bar Bs+2\eta\bar V\Lambda>0,$ and hence, for sufficiently small \(D_{V_L}\), $A_1>0, A_2>0.$
    Together with \(A_0>0\) and \(A_3>0\), this implies
    \begin{equation}
    H(\nu)>0,\qquad
    \nu\geq0.
    \label{eq:noosc}
    \end{equation}
    Thus, an oscillatory Turing instability cannot occur on this portion of the kinetically stable equilibrium branch.

    For the remaining parameter range, where $s\geq0$, the preceding asymptotic argument does not by itself exclude an oscillatory instability. We therefore evaluate the full conditions \eqref{eq:H_condition}--\eqref{eq:osc_positive} numerically along the kinetically stable portions of the equilibrium branch shown in Figure~\ref{fig:bifurcation_a}. At each equilibrium, we compute the nonnegative real roots of $H(\nu)$ and evaluate $C_2(\nu)$ and $C_3(\nu)$ at these roots. No root is found for which $C_2(\nu)>0, C_3(\nu)>0.$ Hence, no oscillatory diffusion-driven instability is detected along the remaining portion of the kinetically stable equilibrium branch.

    Combining this numerical verification with the analytical exclusion for $s<0$ in the regime $D_{V_L}\ll1$, we find no independent oscillatory Turing instability along the kinetically stable equilibrium branch considered here. Therefore, in the rest of the paper, we focus on the stationary Turing instability and the nonlinear localized-pattern regime.

	\section{The Inner Problem: Onset of Spike Self-Replication} \label{sec:inner}
    
    Having established the stationary Turing instability of the homogeneous state, we now turn to the nonlinear localized-pattern regime, in which the viral concentration develops localized spikes. As the domain grows, new spikes may arise through two distinct mechanisms: an existing spike may split into two through self-replication, or a new spike may form from the background through nucleation. In this section, we explore self-replication, which is governed by the local structure and stability of the spike core. Following the asymptotic framework of \cite{gai2025asymptotic}, we derive the corresponding inner problem for the TBV model and determine the local threshold for spike self-replication.
    
    We begin with the system \eqref{eq:T-main}--\eqref{eq:V-main}. Since the $T$-component is non-diffusive, its quasi-steady value is determined pointwise by $V$ through $T=\Theta(V),$ defined in \eqref{eqn:theta(v)}. The system then reduces to
    \begin{subequations}\label{eq:BV_reduced}
    \begin{align}
    B_t&= D_{B_L}B_{xx} +(1-B)B-\eta V^2B, \qquad -1<x<1, \label{eq:B-reduced}\\
    V_t &= \varepsilon_L^2V_{xx} +\frac{\beta}{1+\Theta(V)}V^2B -\mu V+a, \qquad -1<x<1, \label{eq:V-reduced} 
    \end{align} 
    \end{subequations}
    subject to $B_x(\pm1,t)=V_x(\pm1,t)=0,$    where
    \[ D_{B_L}=\frac{D_B}{L^2}, \qquad \varepsilon_L=\frac{\varepsilon}{L}, \qquad \varepsilon^2=D_V\ll1. \]
    
    The behavior of $\Theta(V)$ for large $V$ will be needed in the spike core, so we first establish the following property.   
    \begin{proposition}\label{prop:theta-limit}
    Define 
    \begin{equation} 
    \phi(T):=\delta T-\frac{\rho T^2}{k^2+T^2}.
    \label{eq:phi_T}
    \end{equation}
    Suppose that \eqref{eq:fT_condition} holds, so that $f_T<0$ for $T\geq0$. Then, for each $V\geq0$, the equation $\phi(T)=\frac{\alpha}{1+V}$ admits a unique positive solution $T=\Theta(V)$. Moreover,
    \begin{equation}
    \Theta(V)\to0 \qquad\text{as}\qquad V\to\infty. \label{eq:Theta_limit}
    \end{equation} 
    \end{proposition}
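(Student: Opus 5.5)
The plan is to treat $\Theta$ as the inverse of $\phi$ restricted to $[0,\infty)$. Note first that $f(T,V)=\frac{\alpha}{1+V}-\phi(T)$, so $f_T=-\phi'(T)$. Condition \eqref{eq:fT_condition} then says exactly that $\phi'(T)=\delta-\frac{2\rho k^2T}{(k^2+T^2)^2}>0$ for all $T\ge0$. To see this, bound the subtracted term by its maximum. That maximum is attained at $T=k/\sqrt3$ and equals $\frac{2\rho k^2(k/\sqrt3)}{(4k^2/3)^2}=\frac{9\rho}{8\sqrt3\,k}$, which is strictly less than $\delta$ by assumption. Hence $\phi$ is strictly increasing and continuous on $[0,\infty)$.

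Next I will check the range of $\phi$. We have $\phi(0)=0$. Since $\frac{\rho T^2}{k^2+T^2}<\rho$, we get $\phi(T)\ge\delta T-\rho\to\infty$ as $T\to\infty$. So $\phi$ is a continuous bijection from $[0,\infty)$ onto $[0,\infty)$. For each $V\ge0$ the right-hand side satisfies $0<\frac{\alpha}{1+V}\le\alpha$. The intermediate value theorem then gives a solution $T>0$, and strict monotonicity makes it unique. Positivity holds because $\phi(T)\le0<\frac{\alpha}{1+V}$ at $T=0$. This defines $\Theta(V)=\phi^{-1}\!\left(\frac{\alpha}{1+V}\right)$.

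For the limit, the inverse of a continuous, strictly increasing function on an interval is itself continuous and strictly increasing. Therefore $\phi^{-1}$ is continuous at $0$ with $\phi^{-1}(0)=0$. As $V\to\infty$ we have $\frac{\alpha}{1+V}\to0^+$, so $\Theta(V)\to\phi^{-1}(0)=0$.

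If a quantitative rate is wanted, I would add a linear lower bound. The mean value theorem gives $\phi(T)\ge mT$ with $m:=\delta-\frac{9\rho}{8\sqrt3 k}>0$, so $0<\Theta(V)\le\frac{\alpha}{m(1+V)}$. Equivalently, expanding $\phi(T)=\delta T+O(T^2)$ near zero gives $\Theta(V)\sim\frac{\alpha}{\delta V}$ for large $V$.

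The only step that needs care is the maximization of $\frac{2\rho k^2T}{(k^2+T^2)^2}$, which is routine calculus (set the derivative to zero, giving $k^2=3T^2$). Everything else follows from monotonicity.
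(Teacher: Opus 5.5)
Your proof is correct and follows the same route as the paper. Strict monotonicity comes from $\phi'=-f_T>0$, existence and uniqueness of the positive root from $\phi(0)=0$ and $\phi\to\infty$, and $\Theta(V)\to\phi^{-1}(0)=0$ from continuity of $\phi^{-1}$. Your explicit computation of the maximum $\frac{9\rho}{8\sqrt3\,k}$ and the quantitative bound $0<\Theta(V)\le\frac{\alpha}{m(1+V)}$ are correct additions that the paper does not include.
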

    
    \begin{proof}
    Since 
    \[\phi'(T)=\delta-\frac{2\rho k^2T}{(k^2+T^2)^2}=
    -f_T>0,\]
    the function $\phi$ is strictly increasing for $T\geq0$. Moreover,
    \[\phi(0)=0, \qquad \phi(T)\to\infty \quad\text{as}\quad T\to\infty. \]
    Hence, for every $V\geq0$, the equation $\phi(T)=\frac{\alpha}{1+V}>0$
    has a unique positive solution, denoted by $T=\Theta(V)$.
    
    As $V\to\infty$, $\frac{\alpha}{1+V}\to0,$ and  $\phi$ is continuous and strictly increasing with $\phi(0)=0$, its inverse is continuous on its range. Therefore,
    \[\Theta(V)=\phi^{-1}\left(\frac{\alpha}{1+V}\right) \to \phi^{-1}(0) =0.\]
    \end{proof}

    To seek quasi-steady localized solutions of \eqref{eq:BV_reduced}, we first construct a single spike centered at $x=0$ on $(-\ell,\ell)$, where $\ell$ denotes the half-width of the spike cell. For an equally spaced $K$-spike configuration on $(-1,1)$, $\ell=1/K$, and the full pattern follows by reflection and periodic extension. The onset of self-replication is determined locally by the structure of an individual spike and can therefore be studied from the corresponding single-spike core problem.
    
    Near a spike centered at $x=0$, we introduce the inner scaling
    \begin{equation}
    x=\frac{\varepsilon_L}{\sqrt{\mu}}\,y,
    \qquad
    V=\frac{\sqrt{\mu D_{B_L}}}{\varepsilon_L\sqrt{\eta}}\,\mathcal{V},
    \qquad
    B=\frac{\varepsilon_L\sqrt{\mu\eta}}
    {\beta\sqrt{D_{B_L}}}\,\mathcal{B}.
    \label{eq:inner_scaling}
    \end{equation}
    
    
    In this region, $V$ becomes asymptotically large as $\varepsilon_L\to0$, and hence Proposition~\ref{prop:theta-limit} yields $\frac{1}{1+\Theta(V)}\to1.$ Therefore, to leading order, the quasi-steady problem \eqref{eq:B-reduced}--\eqref{eq:V-reduced} reduces to the coupled core problem
    \begin{equation}
    0= \mathcal{B}_{yy}- \mathcal{V}^2\mathcal{B}, \qquad
    0 = \mathcal{V}_{yy}  -\mathcal{V} + \mathcal{V}^2\mathcal{B}, \qquad y\ge 0.
    \label{eq:core-canonical}
    \end{equation}
    We seek an even single-spike solution satisfying
    \begin{equation}
    \mathcal{V}_y(0) =\mathcal{B}_y(0)=0,\qquad \mathcal{V}(y)\to0 \quad\text{as }y\to\infty.
    \label{eq:core_bc}
    \end{equation}
    Integrating the second equation in \eqref{eq:core-canonical} gives the far-field behavior
    \begin{equation}
    \mathcal{B}(y) \sim \mathcal A y+C_s+o(1), \qquad y\to\infty, \qquad \mathcal A := \lim_{y\to\infty}\mathcal{B}_y=\int_0^\infty \mathcal{V}^2\mathcal{B}\,dy.
    \label{eq:B_farfield}
    \end{equation}
    The slope $\mathcal A>0$ provides a natural parameterization of the family of core solutions, while the constant $C_s=C_s(\mathcal A)$ is determined numerically.
    
    
    \subsection{Core bifurcation and the self-replication threshold} \label{sec:core-numerics}
    To parametrize the core solution branch, we define
    \begin{equation}
    \gamma:=\mathcal{B}(0)\mathcal{V}(0),
    \label{eq:gamma_def}
    \end{equation}
    and then solve \eqref{eq:core-canonical} numerically on a truncated interval $0\leq y\leq Y_{\max}$, subject to
    \[ \mathcal{V}_y(0)=\mathcal{B}_y(0)=0, \qquad \mathcal{V}(Y_{\max})=0, \qquad \mathcal{B}_y(Y_{\max})=\mathcal A, \]
    and numerically continue the resulting family of solutions. Figure~\ref{fig:core_bif} shows the bifurcation diagram of $\gamma$ versus the far-field slope $\mathcal A$. Along the upper solution branch, the single-spike solution ceases to exist at a saddle-node bifurcation as $\mathcal A$ is increased through $\mathcal A=\mathcal A^*\approx1.35.$ Along the lower branch, the solution is volcano-shaped. As discussed below, the saddle-node value $\mathcal A^*$ marks the onset of spike self-replication, which is consistent with the corresponding thresholds obtained for the Gray-Scott \cite{kolokolnikov2005existence, muratov2000static}.
    
    \begin{figure}[h]
    \centering
    \includegraphics [width=\textwidth]{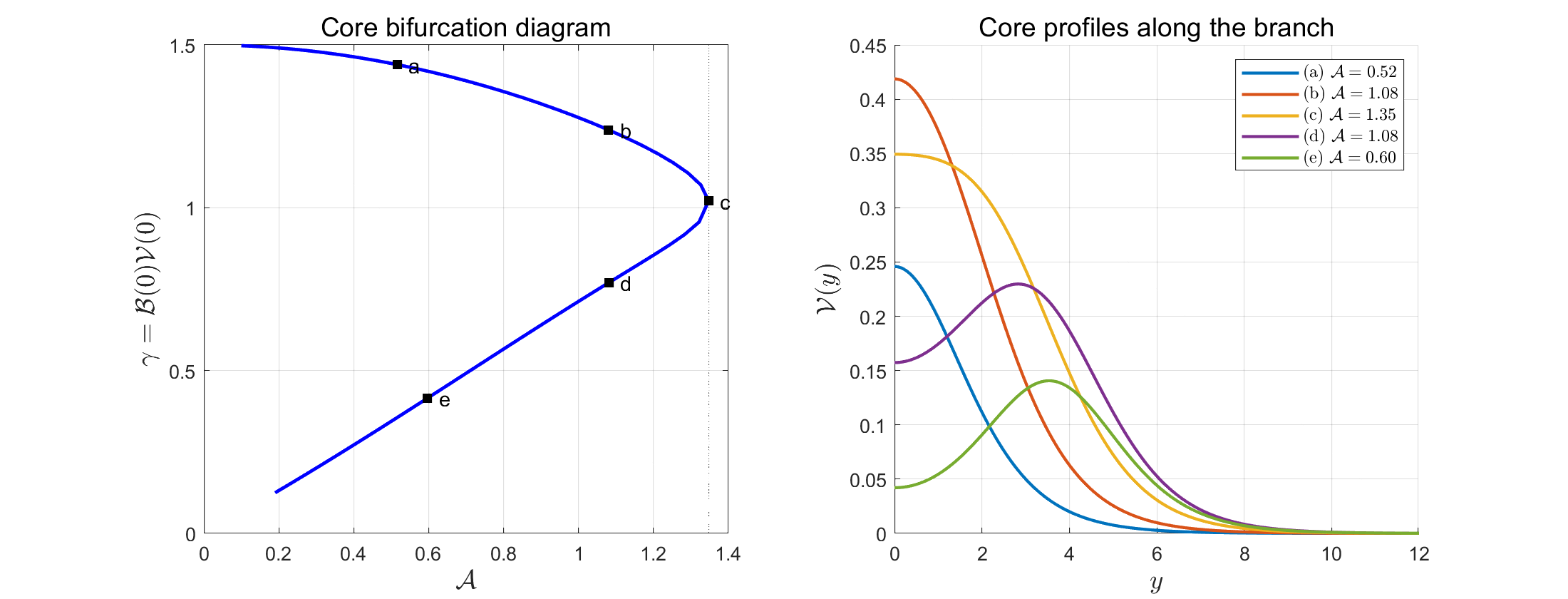} 
    \caption{Left: Bifurcation diagram of single-spike solutions of the
    canonical core problem, showing
    $\gamma=\mathcal B(0)\mathcal V(0)$ as a function of the far-field slope $\mathcal A$. The branch undergoes a saddle-node bifurcation at $\mathcal A=\mathcal A^*\approx1.35$.
    Right: Representative profiles of the viral core component
    $\mathcal V(z)$ at selected points along the solution branch, illustrating the transition from a single-peaked spike to a volcano-shaped profile beyond the fold.} \label{fig:core_bif} \end{figure}

   The stability of the core problem has been studied previously in \cite{doelman2006homoclinic,kolokolnikov2005existence,kolokolnikov2005pulse,nishiura1999skeleton}.  The upper solution branch is linearly stable up to the saddle-node, whereas the lower branch is unstable. At the fold, the dominant eigenvalue passes through zero, and the corresponding neutral eigenfunction has a characteristic dimple-shaped profile. Therefore,  $\mathcal A=\mathcal A^*\approx1.35$ provides the local core threshold for the onset of spike self-replication.

   Along the upper solution branch in Fig.~\ref{fig:core_bif}, we compute the far-field constant $C_s$ from \eqref{eq:B_farfield}, together with the core value $\mathcal B(0)$, as functions of the far-field slope $\mathcal A$. The resulting curves are shown in Fig.~\ref{fig:Cs_core}. In particular, $C_s$ remains positive along the entire upper branch. These quantities will enter the outer matching condition derived in the next section.

   \begin{figure}[htbp]
    \centering
    \includegraphics[width=0.7\textwidth]{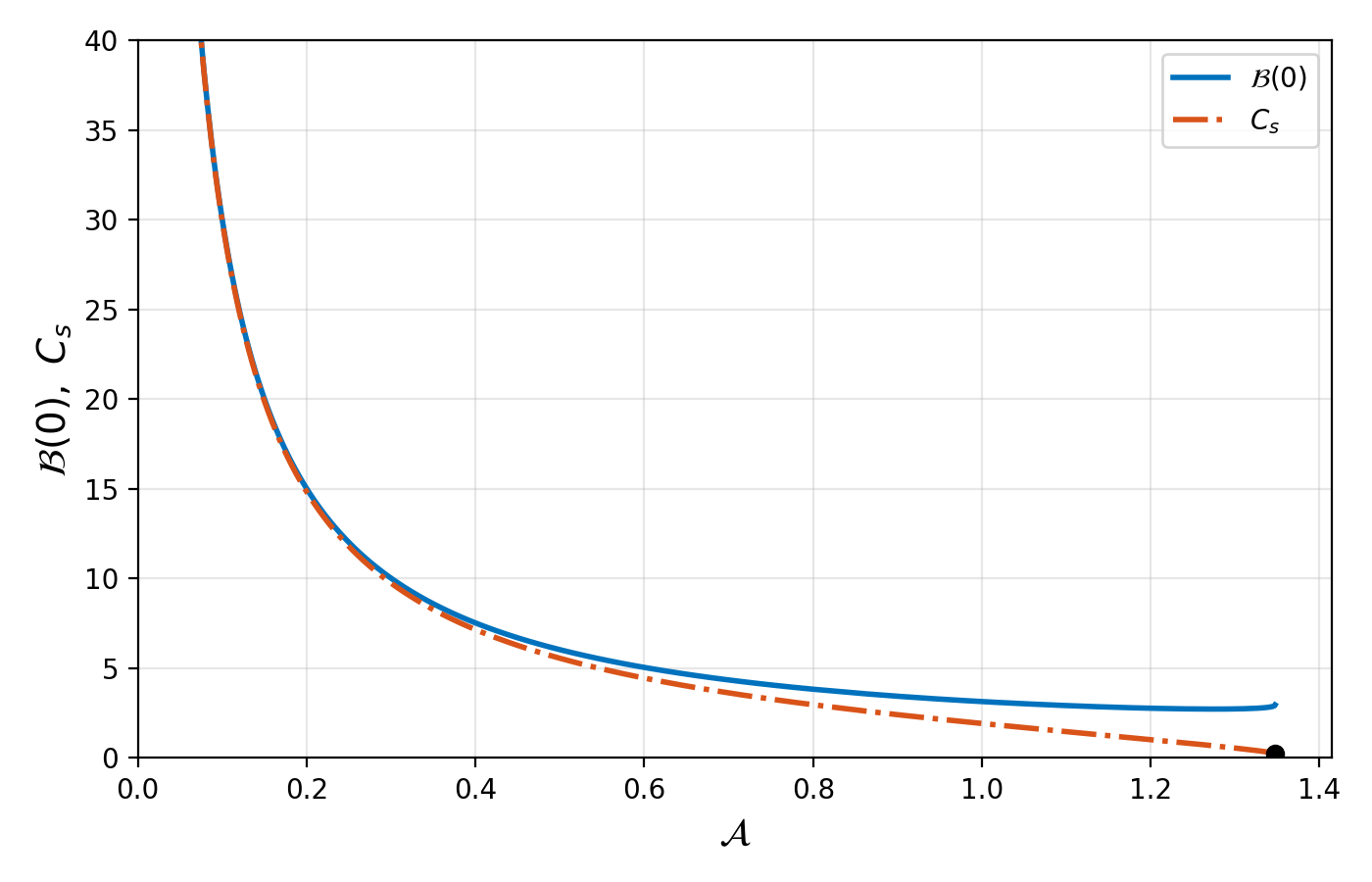} 
    \caption{Plots of $C_s$ and $\mathcal{B}(0)$ calculated from \eqref{eq:core-canonical} and \eqref{eq:B_farfield}. Note that $C_s>0$ up to the fold point $\mathcal{A}_c=1.347$, where $C_s\approx0.246$.}
    \label{fig:Cs_core} 
    \end{figure}

   To determine when this threshold is reached on a growing domain, the core solution must be matched to the outer solution through \eqref{eq:B_farfield}. This is carried out in the next section, where the resulting matching condition relates $\mathcal A$ to the domain length $L$ and determines the critical replication length $L_{\mathrm{rep}}$.
    
	 \section{The Outer Problem: Onset of Spike Nucleation} \label{sec:outer} 
     We now turn to the outer region away from the spike core. Since the spike solution is even about $x=0$, it suffices to consider the interval $O(\varepsilon_L)\ll x<\ell.$ We will show that, as the domain increases, the corresponding outer problem loses solvability that leads to spike nucleation, as shown in \cite{gai2025asymptotic, gai2024nucleation}. Thus, whether new spikes are generated by self-replication or nucleation is determined by which of the inner and outer thresholds is reached first as domain increases.
    
    Away from the spike core, the quasi-steady outer problem satisfies
    \begin{align}
    0 &= \frac{\beta}{1+\Theta(V)}V^2B-\mu V+a,\quad 0^+<x<l,\label{eq:outer-V}\\
    0 &= D_{B_L}B_{xx}+(1-B)B-\eta V^2B, \qquad B_x(\ell)=0. \label{eq:outer-B}
    \end{align}
    
    For the outer analysis, where $V=\mathcal{O}(1)$ and $T=\mathcal{O}(1)$, the nonlinear term $\frac{\rho T^2}{k^2+T^2}$ is small in the regime considered here, where $k$ is relatively large and $\rho$ is small, which is the regime considered in the paper. We therefore approximate
    \begin{equation}
    \Theta(V)\sim h(V):=\frac{\alpha}{\delta(1+V)}.
    \label{eq:h-def}
    \end{equation}
    Hence, we have from \eqref{eq:outer-V} that
    \begin{equation}
    B=G(V) := \frac{(\mu V-a)(1+h(V))}{\beta V^2},\qquad V>V_*:=\frac{a}{\mu}. \label{eq:G-def}
    \end{equation}
    Defining $g(V):=G'(V)$, we obtain
    \begin{equation}
   g(V) = \frac{(1+h(V))(2a-\mu V) +(\mu V-a)Vh'(V)}{ \beta V^3}. 
    \label{eq:g-identity}
    \end{equation}
    Substituting $B=G(V)$ into \eqref{eq:outer-B} then yields the scalar outer problem
    \begin{equation}
    D_{B_L}\big(g(V)V_x\big)_x=R(V), \qquad 0^+<x<\ell,
    \qquad
    V_x(\ell)=0,
    \label{eq:outer-ODE}
    \end{equation}
    where
    \begin{equation}
    R(V) := G(V)\big(\eta V^2-1+G(V)\big).
    \label{eq:R-def}
    \end{equation}
	
	As $x\to 0^+$, the outer solution matches the far-field behavior of the core solution, so using \eqref{eq:B_farfield} and the inner scaling, we obtain
    \begin{align} 
    g(V)V_x&=B_x \to\frac{\mu\sqrt{\eta}}{\beta\sqrt{D_{B_L}}}\,\mathcal A, \quad \text{and} \label{eq:match-slope}\\
    G(V)&=B \to \frac{\sqrt{\mu\eta}}{\beta} \frac{\varepsilon}{\sqrt{D_B}}\, C_s(\mathcal A),\quad \text{as} \quad x\to 0^+.
    \label{eq:match-value}
    \end{align}
    Since $G(V_*)=0$, with $V_*=a/\mu$, expansion of \eqref{eq:match-value} about $V=V_*$ gives \begin{equation}
    V(0^+) = V_* + \frac{V_*^2}{1+h(V_*)} \frac{\varepsilon_L}{\sqrt{D_B}} \sqrt{\frac{\eta}{\mu}}\, C_s(\mathcal A) +\mathcal{O}(\varepsilon_L^2). \label{eq:V0plus} 
    \end{equation}
    We now discuss the solvability of the outer problem \eqref{eq:outer-ODE} subject to the matching conditions \eqref{eq:match-slope} and \eqref{eq:V0plus}.  We first identify the available range of the outer solution and then establish its monotonicity.
    \begin{lemma}\label{lem:g}
    Given $h(V)=\frac{\alpha}{\delta(1+V)}, V_*=\frac{a}{\mu},$ and suppose
    \begin{equation}
    \frac{a}{\mu}<1+\frac{\alpha}{\delta}.
    \label{eq:g-condition}
    \end{equation}
    Then $G(V)>0$ for $V>V_*$, and $g(V)=G'(V)$ has a unique positive zero $V_g$ satisfying $V_*<V_g<\frac{2a}{\mu}.$  Moreover, $g(V)>0,$ for $V_*\leq V<V_g.$
    \end{lemma}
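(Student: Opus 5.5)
The plan is to reduce everything to the sign of a single polynomial and then count its positive roots with Descartes' rule of signs. Condition \eqref{eq:g-condition} is what rules out extra sign changes.

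\emph{Positivity of $G$.} This is immediate. For $V>V_*=a/\mu$ we have $\mu V-a>0$, and $1+h(V)>0$ since $h>0$ for $V>0$. Hence $G(V)=(\mu V-a)(1+h(V))/(\beta V^2)>0$.

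\emph{Reduction of $g$ to a cubic.} Write $c:=\alpha/\delta$, so that
\[
1+h(V)=\frac{1+V+c}{1+V},\qquad h'(V)=-\frac{c}{(1+V)^2}.
\]
Since $\beta V^3(1+V)^2>0$ for $V>0$, \eqref{eq:g-identity} shows that on $V>0$ the sign of $g$ equals the sign of
\[
P(V):=(1+V+c)(1+V)(2a-\mu V)-c(\mu V-a)V .
\]
Expanding gives
\[
P(V)=-\mu V^3+\big(2a-2\mu(1+c)\big)V^2+\big(2a(2+c)+ac-\mu(1+c)\big)V+2a(1+c).
\]

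\emph{Existence of a zero in $(V_*,2a/\mu)$.} At the left endpoint, $P(V_*)=(1+V_*+c)(1+V_*)\,a>0$, because the second term of $P$ vanishes there. At $V=2a/\mu$ the first term vanishes, so $P(2a/\mu)=-c\,a\,(2a/\mu)<0$. By the intermediate value theorem, $P$ has a zero $V_g$ with $V_*<V_g<2a/\mu$.

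\emph{Uniqueness of the positive zero.} I will apply Descartes' rule of signs to the coefficient sequence of $P$. The leading coefficient $-\mu$ is negative and the constant term $2a(1+c)$ is positive. Therefore the number of sign changes is odd, either $1$ or $3$. Three changes would require the $V^2$-coefficient to be positive and the $V$-coefficient to be negative. The $V^2$-coefficient is positive exactly when $a/\mu>1+c=1+\alpha/\delta$, which is excluded by \eqref{eq:g-condition}. So there is exactly one sign change, and $P$ has exactly one positive root, namely $V_g$. This is the step where the hypothesis enters, and it is the only delicate point; the rest is bookkeeping. As an alternative, I would note that $P(V)<0$ for all $V\ge 2a/\mu$, since both terms are then negative. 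That already confines every zero to $(0,2a/\mu)$.

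\emph{Sign of $g$ on $[V_*,V_g)$.} Since $P(V_*)>0$ and $P$ has no positive zero other than $V_g$, continuity gives $P>0$ on $[V_*,V_g)$. Hence $g(V)>0$ for $V_*\le V<V_g$, and $V_g$ is the unique positive zero of $g$.
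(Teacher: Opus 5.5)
Your proof is correct and follows the paper's argument: you locate a zero in $(V_*,2a/\mu)$ from the signs of $g$ at the two endpoints, then apply Descartes' rule of signs to the cubic numerator $P$. Your coefficients agree with the paper's, and \eqref{eq:g-condition} makes the $V^2$-coefficient negative, leaving exactly one sign change. Your explicit evaluations of $P(V_*)$ and $P(2a/\mu)$, and the continuity argument giving $g>0$ on $[V_*,V_g)$, spell out details the paper leaves implicit.
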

    
    \begin{proof}
    Since $G(V)= \frac{(\mu V-a)(1+h(V))}{\beta V^2},$ we have $G(V)>0$ for $V>V_*$. From \eqref{eq:g-identity}, we have $g(V_*)>0, $ and $g\left(\frac{2a}{\mu}\right)<0,$ due to the fact that $h'(V)<0$. Hence $g$ has at least one zero in $(V_*,2a/\mu)$.
    
    To establish uniqueness, solving $g(V)=0$ yields
    \[P(V)=-\mu V^3-2\left(\mu(1+\frac{\alpha}{\delta})-a\right)V^2+\left(a(4+3\frac{\alpha}{\delta})-\mu(1+\frac{\alpha}{\delta})\right)V+2a(1+\frac{\alpha}{\delta})=0.\]
    Under \eqref{eq:g-condition}, the coefficients of $P$ have signs $(-,-,\pm,+),$ and hence exactly one sign change. By Descartes' rule of signs, $P(V)$, and therefore $g(V)$, has exactly one positive zero. Consequently,
    \[V_*<V_g<\frac{2a}{\mu},\] 
    and $g(V)>0$ for $V_*\leq V<V_g$.
    \end{proof}
    
    We next determine the range on which the outer solution is monotone. Using \eqref{eq:R-def}, we write $R(V)=G(V)f(V),$ where $ f(V):=\eta V^2-1+G(V).$ Since $G(V)>0$ for $V>V_*$, the sign of $R(V)$ on this interval is determined by $f(V)$. The following lemma identifies the first positive zero of $f$ above $V_*$.    
    \begin{lemma}\label{lem:f} 
    The equation $f(V)=0$ has either one or three positive roots. Let $V_f$ denote the first root above $V_*=\frac{a}{\mu}.$ Suppose that 
    \begin{equation}
    \eta<\left(\frac{\mu}{a}\right)^2.
    \label{eq:f-condition}
    \end{equation}    
    Then $f(V)<0$ for $V_*<V<V_f.$ Consequently, $R(V)<0, V_*<V<V_f$.
    \end{lemma}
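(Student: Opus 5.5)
Our approach is to clear denominators so that $f(V)=0$ becomes a polynomial equation, and then read off the required sign information from its values at $V_*$ and at infinity. Writing $h(V)=\frac{\alpha}{\delta(1+V)}$ and $\kappa:=\alpha/\delta$, we have $1+h(V)=\frac{1+V+\kappa}{1+V}$. Hence for $V>0$
\[
f(V)=\eta V^2-1+\frac{(\mu V-a)(1+\kappa+V)}{\beta V^2(1+V)},
\]
and multiplying by the positive factor $\beta V^2(1+V)$ gives
\[
F(V):=\beta V^2(1+V)(\eta V^2-1)+(\mu V-a)(1+\kappa+V),
\]
which has the same sign as $f$ on $V>0$. $F$ is a quintic with leading coefficient $\beta\eta>0$ and constant term $F(0)=-a(1+\kappa)<0$.

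\textbf{Counting positive roots.} We expand $F$ and record the signs of its coefficients:
\[
F(V)=\beta\eta V^5+\beta\eta V^4-\beta V^3+(\mu-\beta)V^2+\bigl(\mu(1+\kappa)-a\bigr)V-a(1+\kappa).
\]
Since $F(0)<0$ and $F(V)\to+\infty$ as $V\to\infty$, the number of positive roots counted with multiplicity is odd. To bound it by three, we apply Descartes' rule of signs, as in the proof of Lemma~\ref{lem:g}. The sign pattern is $(+,+,-,\pm,\pm,-)$, so it has at most three sign changes. Indeed, the $V^3$ coefficient is negative and the constant term is negative, so at most three changes occur among the last four entries, and exactly one change occurs between the $V^4$ and $V^3$ coefficients. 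Checking all four cases of the signs of $\mu-\beta$ and $\mu(1+\kappa)-a$ gives either one or three sign changes. Therefore the number of positive roots is one or three, generically counted with multiplicity. (Degenerate double roots are treated as limiting cases, consistent with the statement.)

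\textbf{Sign below the first root above $V_*$.} At $V=V_*=a/\mu$ the second term vanishes, so
\[
F(V_*)=\beta V_*^2(1+V_*)(\eta V_*^2-1).
\]
By \eqref{eq:f-condition}, $\eta V_*^2<1$, so $F(V_*)<0$, i.e.\ $f(V_*)<0$. Because $F(V)\to+\infty$, there is a first root $V_f>V_*$. By continuity and the definition of $V_f$ as the first zero above $V_*$, $f$ does not change sign on $(V_*,V_f)$, so $f<0$ there. Finally, $R=Gf$ with $G>0$ on $V>V_*$ by Lemma~\ref{lem:g}, which gives $R<0$ on $(V_*,V_f)$.

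\textbf{Main obstacle.} The delicate point is the root count. Descartes' rule only gives an upper bound with parity, so we must verify that every admissible sign configuration of the middle coefficients yields an odd number of sign changes that is at most three. We must also remark that this count already includes any roots lying below $V_*$. If any case gives five changes, we would instead split at $V_*$: the factorization $(\mu V-a)$ shows $F<0$ on $(0,V_*]$ whenever $\eta V^2<1$ there, which holds under \eqref{eq:f-condition}. This confines all positive roots to $(V_*,\infty)$.
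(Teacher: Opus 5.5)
Your proof is correct and follows essentially the same route as the paper. You clear denominators to get the quintic with sign pattern $(+,+,-,\pm,\pm,-)$, apply Descartes' rule, and then use $f(V_*)=\eta V_*^2-1<0$ with continuity and $R=Gf$. The one difference is how the root is produced: the paper uses $f(1/\sqrt{\eta})=G(1/\sqrt{\eta})>0$, which also gives $V_f<1/\sqrt{\eta}$ (used later in \eqref{eq:Vf-approx}), whereas you use $F\to+\infty$; your extra observation that $F<0$ on $(0,V_*]$ is correct but not needed.
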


    \begin{proof}
    Multiplying $f(V)=0$ by $\beta\delta V^2(1+V)$ gives 
    \[0=\eta\beta\delta V^5+\eta\beta\delta V^4-\beta\delta V^3 +\delta(\mu-\beta)V^2 +\big[\mu(\alpha+\delta)-a\delta\big]V -a(\alpha+\delta).\]
    The coefficient signs are $(+,+,-,\pm,\pm,-)$, and hence there are either one or three sign changes. By Descartes' rule of signs, $f(V)=0$ therefore has either one or three positive roots, with multiplicities counted.
    
    Since $G(V_*)=0$, and $f(V_*) =\eta V_*^2-1 =\eta\left(\frac{a}{\mu}\right)^2-1<0$ by \eqref{eq:f-condition}. Moreover, $f\left(\frac{1}{\sqrt{\eta}}\right) = G\left(\frac{1}{\sqrt{\eta}}\right)>0,$ since $1/\sqrt{\eta}>V_*$ and $G(V)>0$ for $V>V_*$ by Lemma~\ref{lem:g}. Hence, by continuity, $f$ has at least one zero in $\left(V_*,\frac{1}{\sqrt{\eta}}\right).$
    
    Let $V_f$ denote the zero in $\left(V_*,\frac{1}{\sqrt{\eta}}\right)$. Then $f(V)<0,$ for $V_*<V<V_f.$ It follows that $R(V)=G(V)f(V)<0, V_*<V<V_f.$
    \end{proof}

    Although $V_f$ does not in general have a closed-form expression, it satisfies $V=\sqrt{(1-G(V))/\eta}$, so it lies close to $V_0:=\eta^{-1/2}$ whenever $G(V_0)$ is small. Expanding about $V_0$ (one Newton step) gives
    \begin{equation}
    V_f\approx V_0\left(1-\frac{G(V_0)}{2+V_0G'(V_0)}\right)=\frac{1}{\sqrt{\eta}}
    \left( 1-\frac{1}{2} G\left(\frac{1}{\sqrt{\eta}}\right) \right),
    \label{eq:Vf-approx}
    \end{equation}
    with error $\mathcal{O}\bigl(V_0\,G(V_0)^2\bigr)$. 
    
    We now establish the monotonicity of the outer solution.
    \begin{lemma}\label{lem:monotone}
    Suppose that $V_g<V_f$. Then any solution of \eqref{eq:outer-ODE} satisfying $V_x(\ell)=0$ and taking values in $(V_*,V_h)$ satisfies $V_x>0, $ on $0^+<x<\ell.$ Consequently,
    \[ V_\ell:=V(\ell) = \max_{0\leq x\leq\ell}V(x), \qquad V_\ell\leq V_g. \]
    \end{lemma}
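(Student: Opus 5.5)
We argue by the first-integral and maximum-principle structure of the scalar outer equation \eqref{eq:outer-ODE}. Here $V_h$ is read as $V_g$: the solution takes values in $(V_*,V_g)$, where $g>0$ by Lemma~\ref{lem:g}. Since $V_g<V_f$ by hypothesis, Lemma~\ref{lem:f} gives $R(V)<0$ on $(V_*,V_g)\subset(V_*,V_f)$.

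Set $W(x):=g(V(x))V_x(x)$, which is the outer flux $B_x$. Equation \eqref{eq:outer-ODE} reads $D_{B_L}W_x=R(V)$, so for $V(x)\in(V_*,V_g)$ we get $W_x<0$ and $W$ is strictly decreasing on $(0^+,\ell)$. The boundary condition $V_x(\ell)=0$ gives $W(\ell)=0$. Hence $W(x)>W(\ell)=0$ for all $0^+<x<\ell$, and because $g(V(x))>0$ on this range, $V_x=W/g(V)>0$ there. This is consistent with the matching condition \eqref{eq:match-slope}, which fixes $W(0^+)=\frac{\mu\sqrt\eta}{\beta\sqrt{D_{B_L}}}\mathcal A>0$.

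It follows that $V$ is strictly increasing on $(0^+,\ell]$, so $V_\ell=V(\ell)=\max V$. The solution takes values in $(V_*,V_g)$ on the open interval, so continuity gives $V_\ell\le V_g$. We should still rule out the degenerate equality case: if $V_\ell=V_g$, then $g(V_\ell)=0$, $W(\ell)=0$ holds trivially, and $V_x(\ell)$ need not be forced. This does not change the inequality $V_\ell\le V_g$, which is all the lemma claims.

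The main obstacle is to justify that the solution stays in the range where $g>0$ and $R<0$, rather than merely assuming it. If the hypothesis about the solution's range has to be derived rather than assumed, I would argue by contradiction. Suppose $V$ first reaches $V_g$ at some interior $x_1<\ell$. Then $g(V(x_1))=0$ while $W(x_1)>0$, because $W$ decreases from its positive value to $0$ only at $\ell$. But $W=g(V)V_x$ then forces $|V_x|\to\infty$, so no smooth solution exists. This is exactly the loss of solvability behind nucleation. The lower bound $V>V_*$ near $0^+$ comes from \eqref{eq:V0plus}, since $C_s>0$ on the upper core branch, and monotonicity then keeps $V$ above $V_*$. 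I expect this range argument, together with the singular point of $g$, to need the most care. The monotonicity itself is immediate from the sign of $R$.
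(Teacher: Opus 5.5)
Your proof is correct and is essentially the paper's argument. The paper integrates $D_{B_L}(g(V)V_x)_x=R(V)$ from $x$ to $\ell$, using $V_x(\ell)=0$ and the signs $g>0$, $R<0$ on $(V_*,V_g)$ from Lemmas~\ref{lem:g} and~\ref{lem:f}; this is the same as your observation that the flux $W=g(V)V_x$ is strictly decreasing with $W(\ell)=0$, and your reading of $V_h$ as $V_g$ matches the paper, so your extra discussion of how to derive the range assumption is not needed for the lemma as stated.
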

    
    \begin{proof}
    Since $V_g<V_f$, Lemmas~\ref{lem:g} and \ref{lem:f} imply that $g\left(V\right)>0, f(V)<0,$ for $V_*<V<V_g.$ Then integrating \eqref{eq:outer-ODE} from $x$ to $\ell$ and using $V_x(\ell)=0$ gives
    \[ -D_{B_L}g\left(V(x)\right)V_x(x) = \int_x^\ell f(V(s))\,ds<0.\] 
    Therefore, $V_x(x)>0$ for $0<x<\ell$. Thus $V$ attains its maximum at $x=\ell$, yielding $V_\ell=V(\ell)\leq V_g$.
    \end{proof}

    Next, we multiply \eqref{eq:outer-ODE} by $g(V)V_x$ and integrate. Upon using the monotonicity of $V(x)$ on $ V_*< V < V_g$ together with $V_x(l) = 0$, we obtain
    \begin{equation}
		\frac{D_{B_L}}{2}\left(g(V)V_x\right)^2=\int_{V_l}^{V(x)}R(\xi)g(\xi)d\xi 
		\label{eq:first-integral}
	\end{equation}
    Setting $x \to 0^+$ in \eqref{eq:first-integral}, and using \eqref{eq:match-slope}, we conclude that
    	\begin{equation}
		\mathcal A^2=-\frac{2\beta^2}{\mu^2\eta}\int_{V(0^+)}^{V_l}R(\xi)g(\xi)d\xi,
		\label{eq:alg-A}
	\end{equation}
    where $V(0^+) \sim V_*=\frac{a}{\mu}$, but with $V(0^+) > V_*$ given in \eqref{eq:V0plus}. Since $R < 0$ when $V_* < V_l < V_g$, we observe that the right-hand side of \eqref{eq:alg-A} is positive on this range.

    To proceed solving the outer problem \eqref{eq:outer-ODE}, we define $\mathcal{G}'(\xi)$ by 
    \begin{equation}\label{eq:G_prime}
        \mathcal G'(\xi)\equiv-R(\xi)g(\xi)=-G(\xi)\big(\eta \xi^2-1+G(\xi)\big)G'(\xi)>0, \quad V_*<\xi<V_g.
    \end{equation}
    A first integral of \eqref{eq:G_prime} gives 
    \begin{equation}\label{eq:G-explicit}
    \mathcal G(\xi)= -\frac{1}{2}(\eta\xi^2-1)G(\xi)^2 -\frac{1}{3}G(\xi)^3 +\eta J(\xi),
    \end{equation}
    where
    \begin{align}
    J(\xi)\equiv\int\xi  G^2(\xi)d\xi=\frac{1}{\beta^2}\Bigg[
    &-\frac{a^2(1+c)^2}{2\xi^2} +\frac{2a(1+c)\bigl(ac+\mu(1+c)\bigr)}{\xi} +\frac{c^2(a+\mu)^2}{1+\xi} \nonumber\\
    &+\left(\mu^2(1+c)^2 +4a\mu c(1+c) +a^2c(3c+2) \right)\log\xi \nonumber\\ &-c(a+\mu)\bigl[a(3c+2)+\mu(c+2)\bigr] \log(1+\xi) \Bigg], \quad c\equiv\alpha/\delta.
    \end{align}
    
    In this way, \eqref{eq:first-integral} reduces to the explicit relation
    \begin{equation}
		\mathcal A^2=\frac{2\beta^2}{\mu^2\eta}\big(\mathcal G(V_\ell)-\mathcal G\left(V(0^+)\right)\big),
		\label{eq:first-integral_final}
	\end{equation}
    with $G(V(0^+))$ given in \eqref{eq:V0plus}. Since $\mathcal G^{\prime}(V_l) > 0$ on $V_* < V_l < V_g$, it follows that $\mathcal{A}$ is an increasing function of $V_l$.
	By Lemma \ref{lem:monotone}, on the range of $x$ where $V_* < V_l < V_g$, we use \eqref{eq:G_prime} in \eqref{eq:first-integral} to obtain that
    \begin{equation} \label{eq:vx}
    g(V) \, V_x = \sqrt{\frac{2}{D_{B_L}}} \, \sqrt{\mathcal G(V_l) - \mathcal G(V(x))} \, .
    \end{equation}
    By integrating this separable ODE, we get an implicit relation for $V(x)$ on $0^+ < x < \ell$ given by
    \begin{equation}\label{eq:sepa}
    \chi\left(V(x)\right) = \sqrt{\frac{2}{D_{B_L}}} \, x \, , \qquad \mbox{where} \qquad \chi\left(V(x)\right) \equiv \int_{V\left(0^+\right)}^{V(x)} \frac{g(\xi)}{\sqrt{\mathcal G(V_l) - \mathcal G(\xi)}} \, d\xi \,.
    \end{equation}
    Then, by setting $x = \ell$ and  $V(\ell) = V_l$ in \eqref{eq:sepa}, we obtain an implicit equation for $V_l$ given by
    \begin{equation}\label{eq:chi2}
    \chi(V_l) = \int_{V\left(0^+\right)}^{V_l} \frac{g(\xi)}
    {\sqrt{\mathcal G(V_l) - \mathcal G(\xi)}} \, d\xi =
    \sqrt{\frac{2}{D_{B_L}}} \ell \, ,
    \end{equation}
    where $V(0^+)$, with $V(0^+) > V_*$ given by \eqref{eq:V0plus}.
    
    Since the integral in \eqref{eq:sepa} is improper at $\xi = V_l$, to obtain a more tractable formula for $\chi(V_l)$ we integrate the expression in \eqref{eq:chi2} by parts by using $g(\xi) = -\mathcal G^{\prime}(\xi)/R(\xi)$. This yields the proper integral
    \begin{equation}\label{eq:properchi}
    \chi(V_l) = - 2 \frac{\sqrt{\mathcal G(V_l) - \mathcal G(V(0^+))}}{R(V(0^+))} + 2 \int_{V(0^+)}^{V_l}\, \frac{\sqrt{\mathcal G(V_l) - \mathcal G(\xi)}}{\left[R(\xi)\right]^2} \, R^{\prime}(\xi) \, d\xi \, ,
    \end{equation}
    where $R(\xi) = G(\xi)(\eta \xi^2-1+G(\xi))$ and $\mathcal G(\xi)$ is defined in \eqref{eq:G-def}. 
    \begin{lemma}\label{lem:chi}
    Suppose that $V_*<V_\ell<V_g.$ Then $\chi(V_\ell)$ is strictly increasing in $V_\ell$.
    \end{lemma}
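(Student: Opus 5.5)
The plan is to treat $V(0^+)$ as a fixed lower limit and to remove the $V_\ell$-dependence of the integration range by a change of variables. The ratio condition established in the last step is positive in the range of validity of the leading-order outer problem; if one instead tracks the dependence of $V(0^+)$ on $V_\ell$ through $C_s(\mathcal A)$, that dependence is only $\mathcal O(\varepsilon_L)$ by \eqref{eq:V0plus}, so it perturbs the derivative by $\mathcal O(\varepsilon_L)$. By Lemma~\ref{lem:g}, $g=G'>0$ on $(V_*,V_g)$, so $w=G(\xi)$ is an admissible new variable there, with $dw=g(\xi)\,d\xi$.

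\textbf{Step 1 (rewrite).} Set $w_0:=G(V(0^+))$ and $w_\ell:=G(V_\ell)$, and define $F(w):=\mathcal G(G^{-1}(w))$. Then \eqref{eq:chi2} becomes
\[
\chi=\int_{w_0}^{w_\ell}\frac{dw}{\sqrt{F(w_\ell)-F(w)}},\qquad F'(w)=\frac{\mathcal G'(\xi)}{g(\xi)}=-R(\xi)=-G(\xi)f(\xi)>0 .
\]
The positivity of $F'$ follows from Lemma~\ref{lem:f}. Since $w_\ell$ is strictly increasing in $V_\ell$, it suffices to show that $\chi$ is strictly increasing in $w_\ell$.

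\textbf{Step 2 (scaling).} Write $\Delta:=w_\ell-w_0$ and substitute $w=w_0+t\Delta$ with $t\in[0,1]$. This gives
\[
\chi=\int_0^1\Phi_t(\Delta)^{-1/2}\,dt,\qquad \Phi_t(\Delta):=\frac{F(w_0+\Delta)-F(w_0+t\Delta)}{\Delta^2}.
\]
It is therefore enough to prove $\partial_\Delta\Phi_t<0$ for every $t\in(0,1)$. Differentiating, this is equivalent to
\[
\Delta\big(F'(w_\ell)-tF'(w)\big)-2\big(F(w_\ell)-F(w)\big)<0,\qquad w=w_0+t\Delta .
\]

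\textbf{Step 3 (sufficient condition).} Write $F'(w)=k(w)(w-w_0)$ and suppose $k$ is strictly decreasing. Then
\[
F(w_\ell)-F(w)=\int_w^{w_\ell}k(s)(s-w_0)\,ds> k(w_\ell)\,\tfrac{\Delta^2(1-t^2)}{2},
\]
while
\[
\Delta\big(F'(w_\ell)-tF'(w)\big)=\Delta^2\big(k(w_\ell)-t^2k(w)\big)< k(w_\ell)\Delta^2(1-t^2).
\]
Combining the two gives the desired strict inequality. This is a Chicone/Schaaf-type time-map criterion.

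\textbf{Step 4 (verify $k$ decreasing), the main obstacle.} Expressed in $\xi$,
\[
k=\frac{-R(\xi)}{G(\xi)-w_0}=\big(-f(\xi)\big)\cdot\frac{G(\xi)}{G(\xi)-w_0}.
\]
The first factor is positive on $(V(0^+),V_g)$ by Lemma~\ref{lem:f}. It is decreasing in $\xi$, because $f'(\xi)=2\eta\xi+G'(\xi)>0$ using $g>0$ from Lemma~\ref{lem:g}. The second factor equals $1+w_0/(G-w_0)$, which is positive and decreasing in $G$, since $w_0>0$ by \eqref{eq:V0plus} and $C_s>0$. A product of two positive decreasing functions is decreasing, so $k$ is strictly decreasing in $\xi$. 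Because $G$ is increasing, $k$ is also strictly decreasing in $w$.

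The delicate point is that $R'(\xi)$ changes sign on the interval, since $R(V_*)=0$ and $R<0$ there. Differentiating the proper form \eqref{eq:properchi} directly therefore does not give a signed derivative, which is why the $w$-variable and the ratio condition are used instead.
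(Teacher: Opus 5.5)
Your proof is correct and is essentially the paper's argument: the paper also passes to $B=G(V)$ and rescales, but it anchors the scaling at $0$ ($B=B_\ell t$). It therefore only needs $Q=-f\circ G^{-1}$ decreasing, and the dependence on $B_0$ goes into a lower limit $B_0/B_\ell$ that decreases as $B_\ell$ grows. Your anchoring at $w_0$ keeps the limits fixed at the cost of needing $k(w)=Q(w)\,w/(w-w_0)$ decreasing, which you verify correctly.

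Your Step 3 can be shortened to the identity $\Phi_t(\Delta)=\int_t^1\sigma\,k(w_0+\sigma\Delta)\,d\sigma$, which is visibly decreasing in $\Delta$. Your opening remark about tracking $V(0^+)$ through $C_s(\mathcal A)$ is unnecessary, since the paper likewise holds $B_0$ fixed. That correction is also not obviously uniformly $\mathcal O(\varepsilon_L)$ near the fold $\mathcal A_c$, where $dC_s/d\mathcal A$ generically blows up.
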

    
    \begin{proof}
    Since $g(V)=G'(V)>0$ on $V_*<V_\ell<V_g$ by  lemma \ref{lem:g}, we may use $B=G(V)$ as a variable, and define
    \[Q(B):=-f(V(B))=1-\eta V(B)^2-B.\]
    On the interval considered, $Q(B)>0$, while $\frac{dQ}{dB}=-\frac{2\eta V+g(V)}{g(V)}<0.$ Moreover, since $\frac{d\mathcal G}{dB}=BQ(B),$ we can write
    \[ \chi(B_\ell) = \int_{B_0}^{B_\ell} \frac{dB} {\sqrt{\displaystyle\int_B^{B_\ell}s Q(s)\,ds}}, \quad
    B_0\equiv G(V_0^+),\quad B_l\equiv G(V_l).\]
    Scaling $B=B_\ell t$ yields
    \[ \chi(B_\ell) = \int_{B_0/B_\ell}^{1} \frac{dt} {\sqrt{\displaystyle\int_t^1 s Q(B_\ell s)\,ds}}. \]
    As $B_\ell$ increases, the lower limit decreases and, since $Q'(B)<0$, the integrand increases. Hence $d\chi/dB_\ell>0$. Finally, $dB_\ell/dV_\ell=g(V_\ell)>0$, and therefore $\chi'(V_\ell)>0.$
    \end{proof}
    
    From lemma \ref{lem:chi},  we conclude that $\chi(V_l)$ is a monotone increasing function of $V_l$ on $V(0^+) < V_l \leq V_g$ that reaches its maximum value at $V_l = V_g$.  As such, we define
    \begin{equation}\label{eq: chi_max}
    \chi_{\max} \equiv \chi(V_g) \, , \quad V_{\ell,\max} \equiv V_g \, .
    \end{equation}
    We summarize our asymptotic construction of a K-spike solution of radius $\ell$ to \eqref{TBV_main} as follows:
    \begin{proposition}\label{nuc_rep_prop}
    On the range of parameters given in \eqref{eq:g-condition} and \eqref{eq:f-condition}, so that $\frac{a}{\mu}<V_g < V_f$, where $V_g$ is the unique positive root of $g(V)=0$, and $V_f$ is the first positive root of $f(V)=0$. Our asymptotic construction of a one-spike solution to \eqref{TBV_main} on $|x| \leq \ell$ when $\varepsilon_L \ll 1$ reduces to solving the coupled nonlinear algebraic system
    \begin{equation}\label{eq:nas}
    \chi(V_l;\mathcal{A}) = \sqrt{\frac{2}{D_{B_L}}} \ell \, , \qquad
    \mathcal A^2=\frac{2\beta^2}{\mu^2\eta}\big(\mathcal G(V_\ell)-\mathcal G\left(V(0^+)\right)\big)\,,
    \end{equation}
    for $\mathcal{A}$ and $V_l = V(\ell)$ in terms of the parameters $\alpha, \beta, \delta, \mu, \eta, a,$ and $\ell/\sqrt{D_{B_L}}$. Here, $V(0^+)$ is given in \eqref{eq:V0plus}, while $\chi(V_l)$ and $\mathcal G(\xi)$ are defined in \eqref{eq:properchi} and \eqref{eq:G-explicit}, respectively. For a $K$-spike solution on $- 1 \leq x \leq 1$, we identify that $\ell = 1/K$.
    \end{proposition}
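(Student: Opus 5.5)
The plan is to assemble Proposition~\ref{nuc_rep_prop} from the inner--outer matching already carried out, and to check that each step is valid on the stated parameter range.

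First, under \eqref{eq:g-condition} and \eqref{eq:f-condition}, Lemma~\ref{lem:g} gives $G>0$ and $g>0$ on $(V_*,V_g)$, and Lemma~\ref{lem:f} gives $f<0$, hence $R<0$, on $(V_*,V_f)$. With the hypothesis $V_g<V_f$, both sign conditions therefore hold on the whole interval $(V_*,V_g)$. Lemma~\ref{lem:monotone} then shows that any outer solution of \eqref{eq:outer-ODE} with $V_x(\ell)=0$ taking values in this range is strictly increasing, with $V_\ell\le V_g$. Monotonicity is what justifies the change of variables from $x$ to $V$.

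Second, I will derive the two algebraic relations. Multiplying \eqref{eq:outer-ODE} by $g(V)V_x$ and integrating from $x$ to $\ell$ gives the first integral \eqref{eq:first-integral}. Letting $x\to0^+$ and inserting the inner slope condition \eqref{eq:match-slope}, which comes from the core far-field \eqref{eq:B_farfield} under the scaling \eqref{eq:inner_scaling}, yields the $\mathcal A^2$ relation. The antiderivative $\mathcal G$ in \eqref{eq:G-explicit} makes this relation explicit; one checks $\mathcal G'=-Rg$ by differentiating termwise, and $J$ follows from partial fractions of $\xi G(\xi)^2$. The lower endpoint $V(0^+)$ is fixed by the value matching \eqref{eq:match-value}, linearized about the simple zero $V_*$ of $G$ (where $G'(V_*)=g(V_*)>0$), which gives \eqref{eq:V0plus}. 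This introduces the dependence on $\mathcal A$ through $C_s(\mathcal A)$, so $\chi$ depends on both $V_\ell$ and $\mathcal A$.

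Third, for the length condition I solve \eqref{eq:vx}, taking the positive root because $V_x>0$ by Lemma~\ref{lem:monotone}. Separating variables and integrating from $0^+$ to $\ell$ gives \eqref{eq:chi2}. The integrand has an inverse square-root singularity at $\xi=V_\ell$. Since $\mathcal G'(V_\ell)>0$ for $V_\ell<V_g$, this singularity is integrable, and the integration by parts with $g=-\mathcal G'/R$ converts $\chi$ into the proper form \eqref{eq:properchi}. Here $R\neq0$ on the closed range because $R<0$ there.

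The main obstacle is the consistency of the resulting coupled system. The unknowns $\mathcal A$ and $V_\ell$ enter both equations, through $V(0^+)$ and $C_s(\mathcal A)$. The asymptotic error must also stay uniformly controlled as $V_\ell\to V_g$, where $\mathcal G'(V_g)=0$ and the square-root behaviour degenerates. I would handle this by treating $V(0^+)=V_*+\mathcal O(\varepsilon_L)$ as a regular perturbation, so that to leading order the system decouples. At that order, Lemma~\ref{lem:chi} gives monotonicity of $\chi$, so for each $\ell/\sqrt{D_{B_L}}<\chi_{\max}/\sqrt2$ there is a unique $V_\ell$, and the $\mathcal A^2$ relation then determines $\mathcal A$. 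Finally, reflecting and periodically extending the one-spike cell gives the $K$-spike pattern with $\ell=1/K$.
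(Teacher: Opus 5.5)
Your proposal is correct and follows essentially the same route as the paper. The sign lemmas give monotonicity, the first integral plus slope matching gives the $\mathcal A^2$ relation, value matching about the simple zero $V_*$ of $G$ fixes $V(0^+)$, and separation of variables followed by integration by parts gives \eqref{eq:chi2} and \eqref{eq:properchi}.

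Your last paragraph goes beyond the statement: the paper keeps the $\mathcal{O}(\varepsilon_L)$ coupling through $C_s(\mathcal A)$ and solves the full system by Newton's method. Also, if you set $V(0^+)=V_*$ at leading order, you must work with \eqref{eq:chi2} or the $B$-variable form in Lemma~\ref{lem:chi} rather than \eqref{eq:properchi}, because $R(V_*)=0$ and the two terms of \eqref{eq:properchi} are each of size $\mathcal{O}(\varepsilon_L^{-1})$ and cancel.
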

    
    
    \subsection{Absence of Spike Nucleation}
    
    We now consider the parameter regime such that $V_g>V_f$, where $V_g$ is the unique root of $g(V)=0$ defined in \eqref{eq:g-identity}, and $V_f$ is given in \eqref{eq:Vf-approx}. In this regime, the outer solution remains bounded away from $V_g$, and spike nucleation does not occur as the domain length increases. We first establish the following properties of the outer solution.
    
    \begin{lemma}\label{lemma:nonuc}
    Suppose that $V_g>V_f>V_*$, with $g(V)>0$ for
    $0<V<V_g$, and
    \[ R(V)<0 \quad\text{for }0<V<V_f, \qquad R(V)>0 \quad\text{for }V_f<V\leq V_g. \]
    Moreover, it follows from Lemma \ref{lem:g} that $g'(V_g)<0$. Then any positive, regular solution of
    \begin{equation}\label{eq:outer_nonuc}
    D_{B_L}\big(g(V)V_x\big)_x=R(V),
    \qquad 0<x<\ell, \qquad V(0)=V_*, \qquad V_x(\ell)=0, 
    \end{equation}
    satisfies $V_*<V(x)<V_f<V_g,$ and $V_x(x)>0,$ on $0<x<\ell.$
    \end{lemma}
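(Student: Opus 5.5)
The plan is to work with the flux variable $B=G(V)$. Since $B_x=g(V)V_x$, equation \eqref{eq:outer_nonuc} reads simply $D_{B_L}B_{xx}=R(V)$, so maximum-principle arguments can be applied to $B$ and then transferred back to $V$.

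\emph{Step 1 (a priori range).} Here ``regular'' is taken to mean that the equation stays nondegenerate, i.e.\ $g(V(x))\neq 0$ on $[0,\ell]$. Since $V(0)=V_*<V_g$ and $V$ is continuous, $V$ can never reach $V_g$, because $V_g$ is the unique zero of $g$ by Lemma~\ref{lem:g}. Hence $V<V_g$ on $[0,\ell]$ and $g(V(x))>0$ there. The assumption $g'(V_g)<0$ is used only to confirm that $V_g$ is a genuine transversal zero, so that crossing it would produce a degenerate point. It follows that $G$ is strictly increasing along the range of the solution, and extrema of $V$ and of $B$ occur at the same points. Below $V_*$ we have $G<0$, which is inadmissible since $B=G(V)$ is a resource concentration, so we also take $V\ge V_*$ at this stage.

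\emph{Step 2 (upper bound $V<V_f$).} Let $x_0\in(0,\ell]$ be a maximum point of $V$, and hence of $B$. If $x_0<\ell$ is interior, then $B_{xx}(x_0)\le 0$, so $R(V(x_0))\le 0$, which forces $V(x_0)\le V_f$. If $x_0=\ell$ and $V(\ell)>V_f$, then $R>0$ near $\ell$, so $B_{xx}>0$ on a left neighborhood of $\ell$. Combined with $B_x(\ell)=0$, this gives $B_x<0$ just to the left of $\ell$, so $B$ is larger slightly to the left of $\ell$, contradicting maximality at $\ell$. Thus $\max V\le V_f$.

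Equality must then be excluded. If $V(x_0)=V_f$ at a point where $B_x=0$ (this holds at an interior maximum or at $x_0=\ell$), the problem is a regular second-order ODE since $g>0$. By uniqueness for the initial value problem, the solution coincides with the constant equilibrium $V\equiv V_f$, contradicting $V(0)=V_*<V_f$. Hence $V<V_f$ on $[0,\ell]$.

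\emph{Step 3 (lower bound and monotonicity).} Suppose $V$ had an interior minimum in $(0,\ell)$. There $B_{xx}\ge 0$, so $R\ge0$ and $V\ge V_f$, contradicting Step 2. So $V>V_*$ on $(0,\ell]$. Consequently $R(V(x))<0$ on $(0,\ell)$. Integrating $D_{B_L}B_{xx}=R$ from $x$ to $\ell$ and using $B_x(\ell)=0$ gives
\[
D_{B_L}\,g(V(x))\,V_x(x)=-\int_x^\ell R(V(s))\,ds>0 .
\]
Since $g>0$, this yields $V_x>0$, exactly as in Lemma~\ref{lem:monotone}.

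I expect the main obstacle to be Step 1: justifying that the solution cannot pass through $V_g$, where the equation degenerates and $G$ loses monotonicity. This is exactly where the regularity hypothesis must be interpreted as $g(V(x))\neq 0$. Once that is secured, Steps 2 and 3 are standard maximum-principle and ODE-uniqueness arguments.
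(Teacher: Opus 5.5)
Your Steps 2 and 3 follow the paper's proof closely: a maximum at some $x_0\in(0,\ell]$ with $R(V(x_0))>0$ contradicts $V_{xx}(x_0)\le 0$; uniqueness for the initial value problem rules out $V=V_f$; and integrating from $x$ to $\ell$ gives $V_x>0$.

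The gap is Step 1. By reading ``regular'' as ``$g(V(x))\neq 0$ on $[0,\ell]$'' you assume exactly what that step must prove, namely that the solution never reaches the degenerate value $V_g$. A sign of this is that the hypotheses $g'(V_g)<0$ and $R(V_g)>0$ do no work in your argument. (The sign condition on $R$ is stated on $V_f<V\le V_g$, deliberately including $V_g$.) If ``regular'' means only that $V$ is a classical $C^2$ solution, the exclusion is a short computation, and it is the paper's argument. Suppose $V(x_0)=V_g$ for some $x_0$. Expand $(g(V)V_x)_x=g'(V)V_x^2+g(V)V_{xx}$ and use $g(V_g)=0$ to get
\[
D_{B_L}\,g'(V_g)\,V_x(x_0)^2=R(V_g).
\]
The left side is $\le 0$ because $g'(V_g)<0$, while the right side is $>0$, a contradiction. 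This rules out both transversal crossings ($V_x(x_0)\neq 0$) and tangential touching ($V_x(x_0)=0$). Since $V(0)=V_*<V_g$, continuity gives $V<V_g$ on $[0,\ell]$, hence $g(V(x))>0$, and the rest of your argument goes through.

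A smaller point: you do not need to impose $V\ge V_*$ by a modeling argument. You also do not need the interior-minimum argument in Step 3, which as written does not exclude a minimum at $x=\ell$. The hypotheses give $g>0$ on $(0,V_g)$ and $R<0$ on $(0,V_f)$. So for a positive solution with $V<V_f$, the identity $D_{B_L}g(V(x))V_x(x)=-\int_x^\ell R(V(s))\,ds>0$ already yields $V_x>0$, and then $V(x)>V(0)=V_*$ for $x>0$. This is how the paper concludes.
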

    
    \begin{proof}
    We first show that $V(x)<V_g$ throughout the interval. If $V$ reaches $V_g$ at some point $x_0$, then evaluating \eqref{eq:outer_nonuc} at $x_0$ gives
    \[ D_{B_L}g'(V_g)V_x(x_0)^2=R(V_g). \]
    This is impossible since $g'(V_g)<0$ and $R(V_g)>0$. Hence $V(x)<V_g$.
    
    Next, suppose that $V$ exceeds $V_f$. Since
    $V(0)=V_*<V_f$, there must be a maximum at some $x_0\in(0,\ell]$ with $V_f<V(x_0)<V_g$.
    At this point, $V_x(x_0)=0$ and
    $V_{xx}(x_0)\leq 0$, where the latter is understood as a one-sided derivative if $x_0=\ell$. Consequently,
    \[D_{B_L}g(V(x_0))V_{xx}(x_0)\leq 0,\]
    which contradicts $R(V(x_0))>0$ in
    \eqref{eq:outer_nonuc}. Therefore $V(x)\leq V_f$.
    
    Moreover, equality cannot hold at an interior point, since $V=V_f$ would then be a local maximum, and uniqueness of the corresponding initial value problem would imply $V\equiv V_f$. The same uniqueness argument applies at $x=\ell$, where $V_x(\ell)=0$. Thus $V(x)<V_f$ on $[0,\ell]$.
    
    Finally, integrating \eqref{eq:outer_nonuc} from $x$ to $\ell$ yields
    \[D_{B_L}g(V(x))V_x(x)=-\int_x^\ell R(V(s))\,ds.\]
    Since $R(V)<0$ and $g(V)>0$ for $V<V_f$, we conclude that $V_x(x)>0$ for $0<x<\ell$.
    Together with $V(0)=V_*$, this completes the proof.
    \end{proof}
    
    We next examine how the boundary value
    $V_\ell\equiv V(\ell)$ depends on the domain
    half-length $\ell$. Multiplying
    \eqref{eq:outer_nonuc} by $g(V)V_x$ and integrating once gives
    \begin{equation}\label{eq:nonuc_first_integral}
    \frac{D_{B_L}}{2}\big(g(V)V_x\big)^2
    =\mathcal G(V_\ell)-\mathcal G(V),
    \qquad \mathcal G'(V)=-g(V)R(V). 
    \end{equation}
    Since $V_x>0$, separation of variables yields
    \begin{equation}\label{eq:nonuc_length}
    \ell=\chi(V_\ell)\equiv \sqrt{\frac{D_{B_L}}{2}} \int_{V(0^+)}^{V_\ell} \frac{g(\xi)\,d\xi}
    {\sqrt{\mathcal G(V_\ell)-\mathcal G(\xi)}},
    \qquad V_*<V_\ell<V_f.
    \end{equation}
    
    To establish the behavior of $\chi$, under the regime that $V_f<V_g$, so that $g(V)=G'(V)>0$ for $V_*<V<V_f$. Moreover, $f'(V)=2\eta V+g(V)>0,$ and hence $-f(V)$ is positive and strictly decreasing on $(V_*,V_f)$.

    Using $G$ as the integration variable in \eqref{eq:nonuc_length}, with $y=G(V)$ and $dy=g(V)\,dV$, we obtain
    \[ \mathcal G(V_\ell)-\mathcal G(V) = \int_{G(V)}^{G(V_\ell)} z\left[-f\!\left(G^{-1}(z)\right)\right]\,dz. \] Hence,
    \[ \chi(V_\ell) = \sqrt{\frac{D_{B_L}}{2}} \int_{G(V_0^+)}^{G(V_\ell)} \frac{dy}{ \sqrt{\displaystyle \int_y^{G(V_\ell)} z\left[-f\!\left(G^{-1}(z)\right)\right]\, dz}}.\] 
    Rescaling \(y=G(V_\ell)t\) and $z=G(V_\ell)s$ then gives
    \[ \chi(V_\ell)= \sqrt{\frac{D_{B_L}}{2}} \int_{G(V_0^+)/G(V_\ell)}^1 \left[ \int_t^1 s\left[-f\!\left( G^{-1}\!\big(sG(V_\ell)\big) \right)\right]\,ds \right]^{-1/2}dt. \]

    As $V_\ell$ increases, $G(V_\ell)$ increases, so the lower limit decreases. At the same time, $G^{-1}(sG(V_\ell))$ increases for each fixed $s$, and hence $-f(G^{-1}(sG(V_\ell)))$ decreases. It follows that $\chi(V_\ell)$ is strictly increasing.
    
    Furthermore, we have from $f(V_f)=0$ and $f'(V_f)=2\eta V_f+g(V_f)>0$ that
    \[ -f(V)\sim f'(V_f)(V_f-V), \qquad V\to V_f^-. \] 
    Consequently, $\mathcal{G}(V_f)-\mathcal{G}(V)=\mathcal{O}\big((V_f-V)^2\big),$ so the integrand in $\chi(V_\ell)$ given in \eqref{eq:nonuc_length} behaves like $1/(V_f-V)$ near $V_f$. It follows that $\chi(V_\ell)\to\infty$ as $V_\ell\to V_f^-$. Therefore, the outer solution can be continued to arbitrarily large domain lengths, with $V_\ell\to V_f<V_g $ as $\ell\to\infty.$
    Thus, when $V_g>V_f$, the finite-domain loss of solvability associated with spike nucleation does not occur.
    
    

    \section{Spike Self-Replication and Nucleation} \label{sec:rep_nuc}
    We now combine the results in Sec. \ref{sec:inner} and Sec. \ref{sec:outer} to determine the critical domain lengths associated with the two mechanisms. By comparing the corresponding thresholds, we investigate how the dominant spike-generating mechanism depends on the model parameters. We then construct phase diagrams that distinguish regions of self-replication, nucleation, and the absence of nucleation. Finally, we examine the global bifurcation structure of steady-state solutions and compare the numerical results with the asymptotic predictions.

    \subsection{Critical Domain Lengths and Diffusion Thresholds}\label{subsec: Critical Domain Lengths}
    Our asymptotic theory predicts that spike self-replication will occur for the range of the parameter $a$ where \eqref{eq:nas} has a solution with $\mathcal{A} > \mathcal{A}_c \approx 1.347$ but $V_l < V_g$. On such a range for $a$, we predict that for a K-spike solution, spike-replication occurs when
    \begin{equation}\label{eq:thresh_repD}
    D_{B_L} < D_{B_L, K}^{rep} \equiv \frac{2}{K^2 \left[\chi(V_l)\right]^2} \,.
    \end{equation}
    Here, $V_l < V_g$, is calculated numerically from solving \eqref{eq:nas} with $\mathcal{A} = \mathcal{A}_c \approx 1.347$. Equivalently, in terms of the domain half-length $L$, we use $D_{B_L}={D_B/L^2}$ to predict that for a K-spike solution, spike self-replication occurs when
    \begin{equation}\label{eq:thresh_repL}
    L > L_K^{rep} \equiv \sqrt{\frac{D_B}{2}} K \chi(V_l) \, .
    \end{equation}
    Alternatively, spike nucleation near $x = \pm \ell$ is predicted to occur for the range of $a$ where \eqref{eq:first-integral_final} has a solution with $V_l = V_g$ and $\mathcal{A} < \mathcal{A}_c \approx 1.347$. For this range of $a$,  the slope $\mathcal A_{\rm nuc}$ solves
	\begin{equation}
		\mathcal A_{\rm nuc}^2=\frac{2\beta^2}{\mu^2\eta}\Big[\mathcal G(V_g)-\mathcal G\big(V_0^+(\mathcal A_{\rm nuc})\big)\Big],
		\qquad \mathcal A_{\rm nuc}<\mathcal A_c,
		\label{eq:nuc-A}
	\end{equation}  
    we then predict that spike nucleation occurs when
    \begin{equation}\label{eq:thresh_nucD}
    D_{B_L} < D_{B_L, K}^{nuc} \equiv \frac{2}{K^2 \left[\chi_{\max}\right]^2} \, , \quad \mbox{where} \quad   \chi_{\max} \equiv \chi(V_g;\mathcal{A_{\rm nuc}}) \, ,
    \end{equation}
    or, equivalently, when the domain half-length $L$ is sufficiently large so that
    \begin{equation}\label{eq:thresh_nucL}
    L > L_K^{nuc} \equiv \sqrt{\frac{D_B}{2}} K \chi_{\max} \, .
    \end{equation}
	The thresholds in \eqref{eq:thresh_repD}--\eqref{eq:thresh_nucL} are calculated by using Newton's method on the nonlinear algebraic system \eqref{eq:nas}. The critical value of $a$, denoted by $a_c$, that distinguishes whether spike-replication or spike nucleation behavior occurs first as $D_{B_L}$ is decreased, is obtained by numerically solving \eqref{eq:first-integral_final} for $a$, where we set $\mathcal{A} = \mathcal{A}_c\approx 1.347$ and $V_l = V_g$. On the range $0 < a < a_c$ spike self-replication occurs, while spike nucleation occurs on the range $a >a_c$. Since the threshold $a_c$ is independent of $L$, we conclude that either only spike nucleation events or only spike self-replication events will occur as the domain half-length $L$ increases.
    \begin{figure}[ht]
		\centering		\includegraphics[width=1\textwidth]{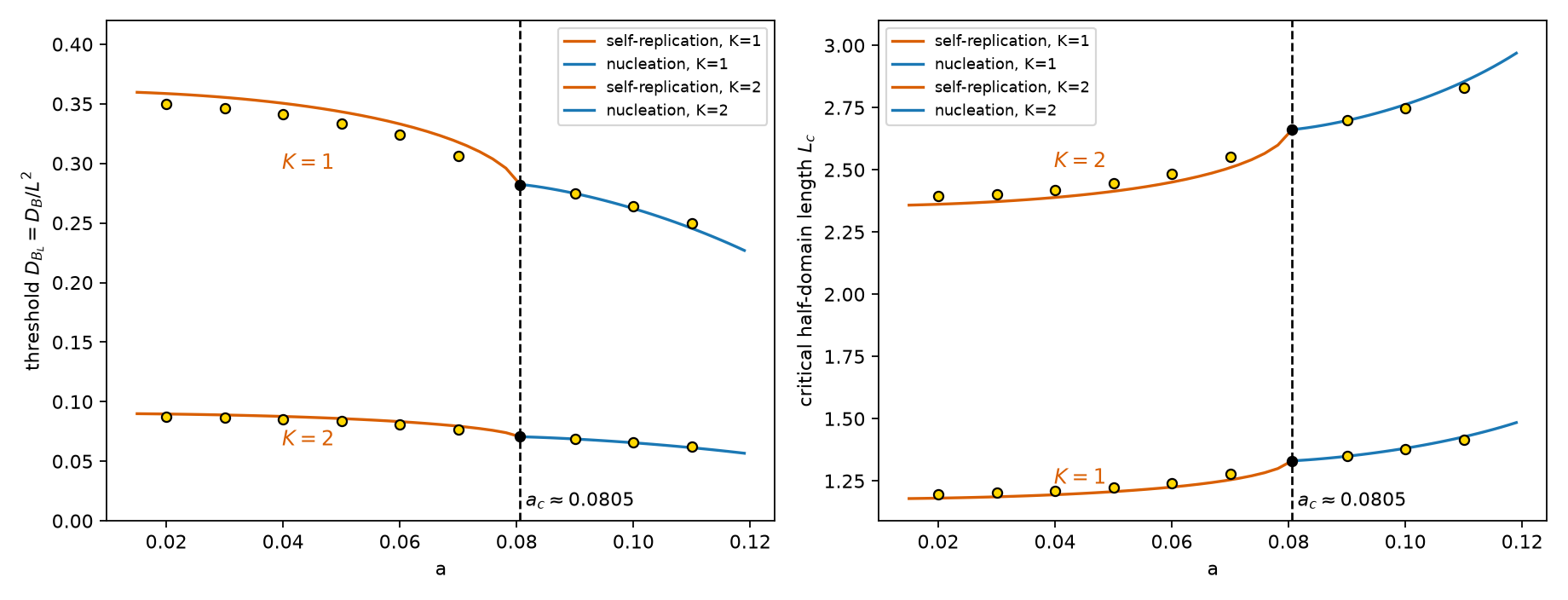}
		\caption{Asymptotic thresholds for spike self-replication $a<a_c$ and nucleation $a>a_c$ for $K=1,2$ spikes, computed from \eqref{eq:thresh_repD} and \eqref{eq:thresh_nucD}  with $D_B=0.5, \varepsilon =0.01$, $\eta=0.5$, and other parameter values in Table~\ref{tab:params}. Left: Threshold $D_{B_L}=D_B/L^2$. Right: Critical domain half-length $L_c$. The black dot marks the critical value $a_c\approx 0.08$. The yellow dots represent results from full numerical simulations using \textit{FlexPDE} \cite{flexpde2015}.
		}
		\label{fig:nuc_rep_thresholds}
	\end{figure}
    For $\eta=0.5, \varepsilon = 0.01$ and $D_B = 0.5$, $K = 1, 2$, and other parameter values given in table \ref{tab:params},  in Fig.~\ref{fig:nuc_rep_thresholds} we plot the spike self-replication threshold $D_{B_L, K}^{rep}$ and the spike nucleation threshold $D_{B_L, K}^{nuc}$ versus $a$, as computed from \eqref{eq:thresh_repD} and \eqref{eq:thresh_nucD}, respectively, as $a$ is varied. The computed threshold $a = a_c \approx 0.08$ separates whether spike self-replication or spike nucleation occurs. We conclude that, as $L$ increases, which corresponds to decreasing $D_{B_L} = D_B/L^2$ and $\varepsilon_L = \varepsilon/L$, spike self-replication owing to the non-existence of the core solution is the dominant instability when $a < a_c$. In contrast, on the range $ a >a_c$, spike nucleation events owing to the non-existence of the outer solution are predicted to be the dominant instability as $D_L$ is decreased. As indicated by the yellow dot marks in Fig.~\ref{fig:nuc_rep_thresholds}, we show a very close agreement between our prediction of the two stability thresholds with the corresponding results computed from full numerical time-dependent simulations of \eqref{TBV_main} using \textit{FlexPDE} \cite{flexpde2015}. The initial conditions for the full  simulations in Fig.~\ref{fig:nuc_rep_thresholds} is a K-spike equilibrium with $K=1$ and $K=2$ respectively. Then numerical values of $D_{B_L}$ were recorded as instability occurs.  Setting $D_B = 0.5$, in the rigth panel of Fig.~\ref{fig:nuc_rep_thresholds} we show a similar plot but in terms of the two critical length thresholds $L_K^{rep}$ and $L_K^{nuc}$ as defined in \eqref{eq:thresh_repL} and \eqref{eq:thresh_nucL}, respectively.
    \subsection{Phase Diagrams and Competing Mechanisms}\label{subsec:phase_diagram}
    
    We now combine the critical thresholds derived in Section \ref{subsec: Critical Domain Lengths} to construct phase diagrams illustrating the competition between spike self-replication and nucleation. In particular, we consider the self-replication and nucleation thresholds, together with the boundary $V_g=V_f$, which separates parameter regimes where nucleation is possible from those where it is excluded. We also include the Turing bifurcation curve, which marks the onset of spatial pattern formation from the homogeneous steady state. Together, these curves partition the parameter space into distinct regions characterized by the existence and selection of different pattern-forming mechanisms.
    
    In regions where spatially localized patterns exist, the relative sizes of the critical domain lengths $L_{\mathrm{rep}}$ and $L_{\mathrm{nuc}}$ determine which spike-generating mechanism is expected to occur first as the domain grows. More specifically, when $L_{\mathrm{rep}}<L_{\mathrm{nuc}}$, spike self-replication occurs first as the domain grows, whereas if $L_{\mathrm{nuc}}<L_{\mathrm{rep}}$, spike nucleation occurs first. In the regime $V_g>V_f$, spike nucleation does not occur, although self-replication may still occur if the critical saddle-node threshold of the inner core problem is reached. In this way, the phase diagrams provide a summary of both the onset of pattern formation and the subsequent competition between spike self-replication
    and nucleation.
    \begin{figure}[ht]
		\centering
		\includegraphics[width=0.8\textwidth]{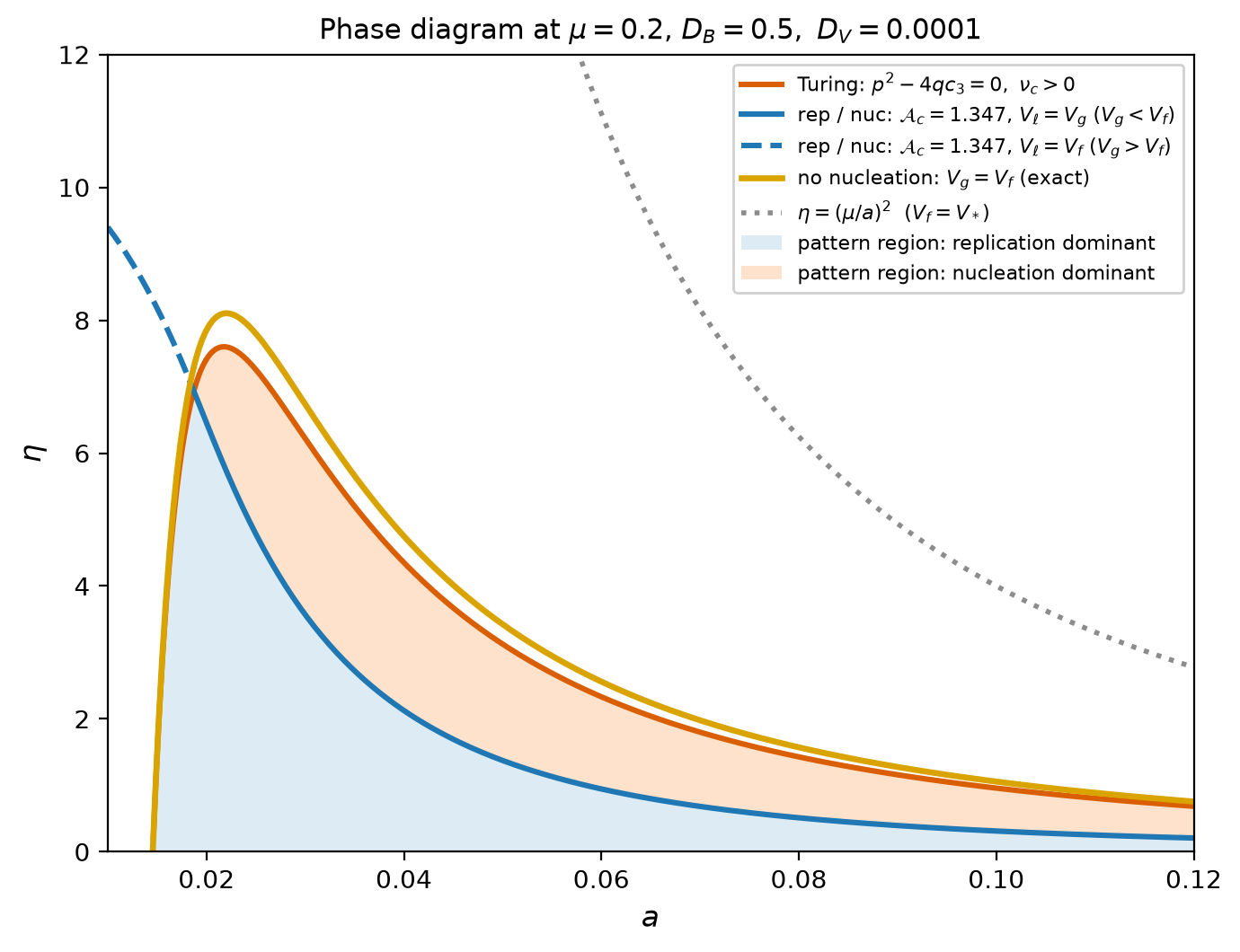}
		\caption{Phase diagram of the TBV model in the $(a,\eta)$ plane for  $\mu=0.2$,  $D_B=0.5$, $D_V=\varepsilon^2=10^{-4}$ and other parameter values given in table \ref{tab:params}. Orange: Turing threshold given in \eqref{eq:stationary_threshold}; the homogeneous state is Turing unstable below this curve. Blue: replication-nucleation boundary obtained from $\mathcal A=\mathcal A_c\approx1.347$ with $V_\ell=V_g$ where $V_g<V_f$ (solid) and $V_\ell=V_f$ where $V_g>V_f$ (dashed). Yellow: no-nucleation curve $V_g=V_f$; nucleation is possible only below it. Grey dotted: $\eta=(\mu/a)^2$, at which $V_f=V_*$. The shaded area is the pattern region below the Turing curve, split by the blue curve into a replication-dominant part (light blue) and a nucleation-dominant part (light orange).}
		\label{fig:phase_diagram}
	\end{figure}

    Fig.~\ref{fig:phase_diagram} shows the phase diagram for $\mu=0.2$. The Turing curve $p^2-4qc_3=0$ (orange) marks the onset of Turing instability, with the homogeneous state unstable below it. Within this region, the replication-nucleation curve (blue) separates two spike-generating mechanisms. Below the blue curve (light blue), the core parameter $\mathcal A$ reaches its critical value $\mathcal A_c$ before the outer solution reaches the fold $V_g$, leading to spike self-replication as $\ell$ increases. Between the blue and orange curves (light orange), nucleation occurs first. 
    
    The no-nucleation curve $V_g=V_f$ (yellow) lies above the Turing curve, indicating that nucleation is possible throughout the Turing-unstable region. Above the no-nucleation curve, the outer solution approaches $V_f$ without reaching $V_g$, and hence nucleation cannot occur. For sufficiently small $a$, the replication-nucleation curve crosses above the rising branch of the Turing curve, so the entire Turing-unstable region is dominated by self-replication. In the regime $V_g>V_f$ (dashed blue), the replication threshold is instead evaluated at $V_\ell=V_f$, corresponding to the large-domain limit of the outer solution.
    
    The ordering of the Turing and no-nucleation curves is not specific to the parameter values used in the figure, but follows from the structure of the model. At the homogeneous steady state, $\bar B=G(\bar V)=1-\eta\bar V^2$, so that $f(\bar V)=0$ and $\bar V=V_f$. Using \eqref{eq:pq}, 
    we obtain 
    \begin{equation}\label{eq:p_Gprime}
    p=-f_T\left[\frac{D_B\beta\bar V^2}{1+\bar T}G'(V_f)+D_V\bar B \right].
    \end{equation}
    Since $f_T<0$ and $\bar B>0$, the Turing condition $p<0$ requires $G'(V_f)<0$. Recall that $G'=g$, where $g$ has a unique positive root $V_g$ and changes sign from positive to negative at $V_g$. Thus, $G$ increases for $V<V_g$ and decreases for $V>V_g$. Consequently, $G'(V_f)<0$ implies $V_f>V_g$, which is precisely the regime where spike nucleation is possible. This explains why the Turing curve lies below the no-nucleation curve.
    
    The two curves become asymptotically close as $D_V=\varepsilon^2\to0$. Indeed, at the Turing threshold, $p=-2\sqrt{qc_3}$, where $q=-f_TD_B\varepsilon^2$. Substituting into \eqref{eq:p_Gprime} gives, to leading order,
    \[G'(V_f)\sim -\frac{2(1+\bar T)}{\beta\bar V^2} \sqrt{\frac{c_3}{-f_T}}\, \frac{\varepsilon}{\sqrt{D_B}}.\]
    Since $G'(V_g)=0$ and $G''(V_g)<0$, it follows that $V_f-V_g=O\left(\frac{\varepsilon}{\sqrt{D_B}}\right).$
    
    In Fig. \ref{fig:phase_diagram}, the no-nucleation curve is computed using the exact first root $V_f$ of $f$. The condition $V_g=V_f$ gives $\eta_{\rm nn}=\frac{1-G(V_g)}{V_g^2}.$ The asymptotic approximation \eqref{eq:Vf-approx} requires $G(V_0)\ll1$ and becomes less accurate as $a$ decreases or $\mu$ increases, since $G(V_g)\approx \frac{\mu^2(1+h(V_g))}{4\beta a}.$
    
    For smaller $\mu$, the discrepancy is noticeable only at small $a$. For larger $\mu$, $G(V_g)$ approaches one, and the exact no-nucleation curve reaches $\eta_{\rm nn}=0$. When $a$ decreases further, nucleation is absent for all $\eta$. This behavior is not captured by the asymptotic approximation.
    
    Finally, the localized-pattern thresholds are computed using the leading-order approximation $T\simeq h(V)=\alpha/[\delta(1+V)]$, which neglects the nonlinear self-enhancement term and introduces a relative error bounded by $e=\rho\alpha/(\delta^2k^2)\ll1$. Whereas the Turing curve is obtained from the exact homogeneous steady state. Recomputing the localized-pattern thresholds using the exact relation for $T$ produces only $O(e)$ corrections, corresponding to a small shift in $a$ along the steep left branches.

    \subsection{Global bifurcation diagrams}\label{sch:global}

\begin{figure}[htbp]
    \centering
    \includegraphics[width=0.99\textwidth, height=6.0cm]{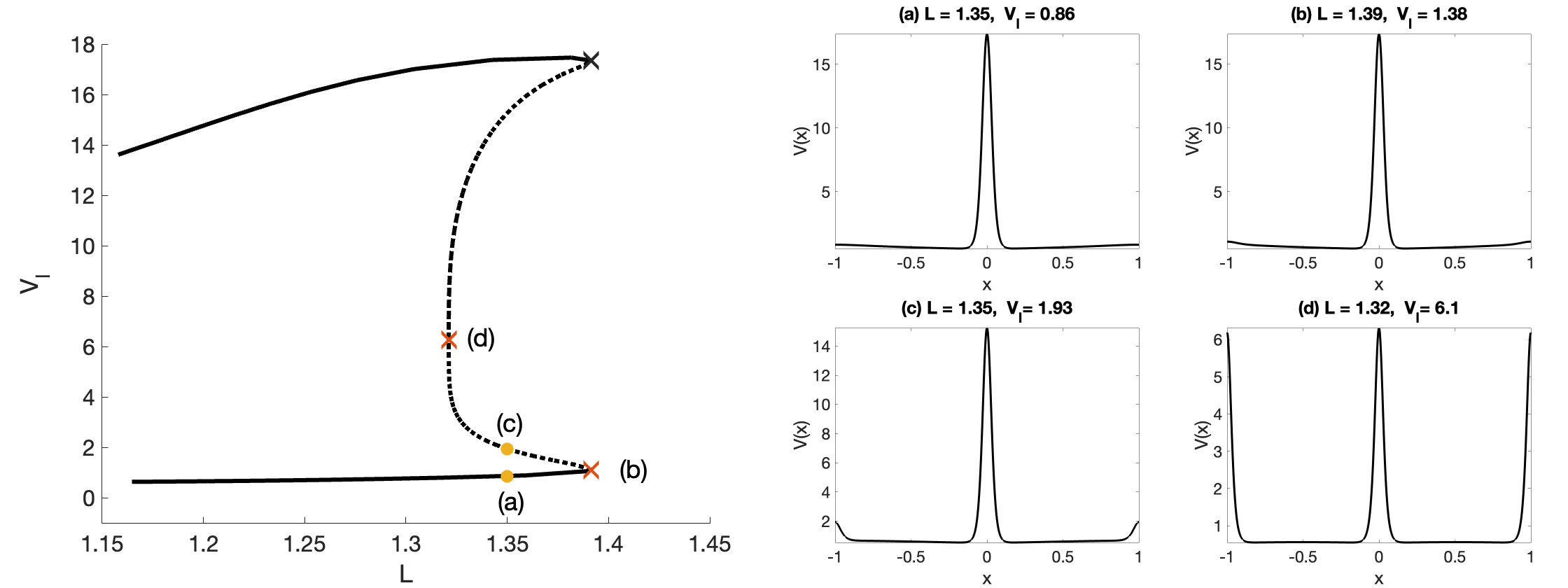}
    \caption{Left panel: Global bifurcation diagram of single-spike steady-states for the TBV model \eqref{TBV_main} as computed using \textit{pde2path} \cite{pde2path}, where $L$ is the bifurcation parameter and the vertical axis is $\mu = v(\ell)$. Parameters: $a = 0.1$, other parameters are the same as in Fig. \ref{fig:nuc_rep_thresholds}. Since $a>a_c $, where $a_c \approx 0.08$, we predict that spike nucleation behavior occurs as $L$ is increased starting from point (a) on the lower branch. The red point (b) marks a saddle-node point that is connected to an unstable middle branch that has an interior spike together with two emerging boundary spikes. Solid portions are linearly stable while dashed portions are unstable. Right panel: Spike profile $V(x)$ and bifurcation values (top of subfigures) at the indicated points in the left panel.}
    \label{fig:bif_nuc}
\end{figure}

In this section, we numerically compute global bifurcation diagrams of one-spike steady-state solutions for \eqref{TBV_main} using \textit{pde2path} \cite{pde2path}, where $L$ is taken as the bifurcation parameter and we fix other parameters given in Fig. \ref{fig:nuc_rep_thresholds}. When $a = 0.1$ is chosen on the range $ a >a_c$, where $a_c \approx 0.08$, the global bifurcation diagram of single-spike steady-states shown in Fig.~\ref{fig:bif_nuc} provides a detailed view on how new spikes are created near the domain boundaries when $L$ approaches the saddle-node bifurcation point, which we denote by $L^{num}$. In this bifurcation diagram, the vertical axis in the left panel of Fig.~\ref{fig:nuc_rep_thresholds} is the boundary value $V_l\equiv V(\ell)$. Solid and dashed portions of the branches indicate linearly stable and unstable solutions, respectively. In the right panel of Fig.~\ref{fig:nuc_rep_thresholds} the solution $V(x)$ at the marked points is shown. Starting from the bottom solution branch at point (a), this branch that has one interior spike ceases to exist as $L$ increases above $L^{num}$, where $L^{num} \approx 1.39$. We emphasize that this saddle-node point is well-approximated by the value $L_1^{nuc}$ in \eqref{eq:thresh_nucL}, at which the outer solution ceases to exist (this is also confirmed by Fig.~\ref{fig:nuc_rep_thresholds} for $a = 0.1$ and $K = 1$). Traversing the middle branch on Fig.~\ref{fig:bif_nuc} starting from the saddle-node point, we observe at point (c) that new boundary spikes are nucleated at the domain endpoints.

In Fig.~\ref{fig:bif_rep} we use \textit{pde2path} \cite{pde2path} to obtain a global bifurcation diagram for single-spike steady-states of \eqref{TBV_main} for the value $a = 0.05 < a_c$, where we predict that spike self-replication will occur as $L$ is increased. In the left panel of Fig.~\ref{fig:bif_rep}, the vertical axis is now $V(0) B(0)$. The computed saddle-node value is $L^{num} \approx 1.22$, which is well approximated by the critical threshold $L_1^{rep}$ in \eqref{eq:thresh_repL}, for which the core problem \eqref{eq:core-canonical} has a saddle-node point when coupled to the outer solution (see Fig.~\ref{fig:nuc_rep_thresholds} for $a = 0.05$ and $K = 1$).

\begin{figure}[htbp]
    \centering
    \includegraphics[width=0.99\textwidth, height=6.0cm]{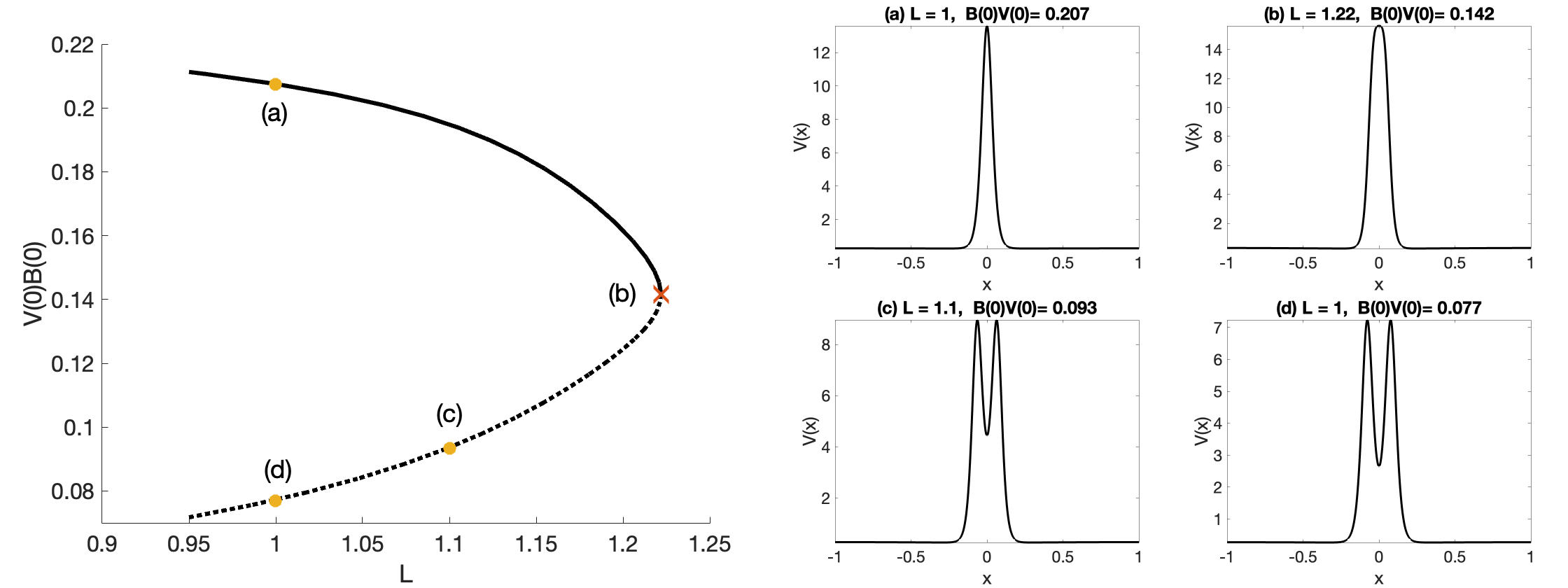}
    \caption{Left panel: Same plot and parameters as in Fig.~\ref{fig:bif_nuc} except that $a = 0.05 < a_c$. The vertical axis is now $V(0) B(0)$, which better measures the spike amplitude. Since $a < a_c$, we predict that spike self-replication will occur as $L$ is increased starting from point (a) on the upper branch. The red star marks a saddle-node point that is a signature for spike self-replication. The linearly stable upper branch is connected to an unstable lower branch where the spike profile has a volcano shape. Observe that the global bifurcation diagram has a qualitatively similar form to that for the core problem in Fig.~\ref{fig:core_bif}. Right panel: Spike profile $V(x)$ and bifurcation values (top of subfigures) at the indicated points in the left panel.}
    \label{fig:bif_rep}
\end{figure}

    \section{Discussion} \label{sec:discussion}
    In this paper, we have investigated pattern formation in a reaction--diffusion model for the tulip breaking virus (TBV), with particular emphasis on localized patterns and their evolution on growing domains. Previous work by Wong et al. \cite{wong2025tulip} introduced the TBV model and demonstrated how the interaction between viral infection, resource availability, and pigment production can generate striped patterns on growing tulip petals. Building on this work, we have carried out a more detailed analysis of the underlying pattern-forming mechanisms. In particular, our study connects the classical Turing instability of the homogeneous steady state with the existence of strongly localized spike patterns and the mechanisms by which new spikes are generated as the domain grows.
    
    In the regime of small viral diffusivity, we used matched asymptotic expansions to construct localized spike solutions and derive the critical conditions for spike self-replication and nucleation. These two mechanisms are governed by different parts of the solution: self-replication is associated with a saddle-node threshold of the inner spike core, whereas nucleation occurs when the outer solution reaches a fold of the nonlinear nullcline. By combining these thresholds with the Turing instability condition, we obtained a unified phase diagram that identifies the parameter regimes associated with different pattern-generating mechanisms. The phase diagram also identifies a regime in which spike nucleation does not occur, although self-replication may remain possible. These results extend previous asymptotic studies of spike-generating mechanisms in classical reaction--diffusion systems~\cite{gai2025asymptotic} to the biologically motivated TBV model.
    
    An important finding of this study is the analytical connection between Turing instability and spike nucleation. We showed that, for the TBV model, the Turing-unstable region lies entirely within the regime $V_g<V_f$, where spike nucleation is possible. This ordering follows from a relation between the Turing instability condition and the slope of the outer nullcline, and is independent of the particular parameter values considered. Moreover, the Turing and no-nucleation boundaries become asymptotically close as the ratio of viral to resource diffusivity tends to zero. Nevertheless, the condition $V_g<V_f$ only allows nucleation; whether nucleation or self-replication occurs first depends on their respective critical thresholds.
    
    While previous asymptotic studies have established phase diagrams for spike self-replication and nucleation in several reaction--diffusion models \cite{gai2025asymptotic}, their relationship with Turing instability was not considered. This connection is not immediate, since Turing instability concerns the linear stability of the homogeneous state, whereas spike nucleation is determined by the nonlinear outer problem. Related links between Turing bifurcation and the emergence of localized states have been studied in activator--inhibitor systems \cite{villar2023degenerate}.  It would therefore be interesting to investigate whether similar relationships arise in other reaction-diffusion models and to identify the conditions under which they hold.
    
    From a biological perspective, the localized viral spikes represent regions of elevated viral concentration, where pigment production is reduced and resources are depleted \cite{lesnaw2000tulip}. The emergence of additional spikes during petal growth therefore provides a possible mechanism for the formation of new regions of reduced pigmentation, contributing to the development of striped patterns \cite{wong2025tulip}. Our results further suggest that the spatial distribution of these regions may depend on the relative rates of viral replication, resource depletion, and diffusion. In particular, changes in these parameters can alter whether existing viral regions split or new regions develop between them. Thus, the analysis provides a mechanistic interpretation of how different stripe patterns may arise during petal growth.  Moreover, the constant viral input $a$ also plays an important role in determining the pattern-generating mechanism. By maintaining a nonzero viral background, it modifies the outer solution and can shift the balance between spike self-replication and nucleation.
    
    Several questions remain open. The critical thresholds derived here assume sufficiently slow domain growth. It would be interesting to investigate how finite growth rates affect the onset of spike self-replication and nucleation, including possible delayed transitions. Further investigation of the global bifurcation structure, together with the stability and interactions of multiple localized spikes, may also provide insight into the selection and spacing of pigmentation patterns during petal growth. These studies could help explain the diversity of stripe patterns observed in virus-infected tulips.

    \section*{Acknowledgements}
    Chunyi Gai is grateful for the support of the NSERC Discovery grant program. Fanya Xia was supported by the Mitacs Globalink Research Internship program.
    
	\bibliographystyle{plain} 
    \bibliography{refs}

\end{document}